\documentclass[12pt]{article}
\usepackage{amsmath}
\usepackage{amssymb}
\usepackage{epsfig}
\usepackage{textcomp}
\usepackage{amsthm}

\newtheorem{theorem}{Theorem}
\newtheorem{definition}[theorem]{Definition}
\newtheorem{lemma}[theorem]{Lemma}
\newtheorem{corollary}[theorem]{Corollary}

\usepackage{marginnote}
\usepackage[marginparwidth=2cm]{geometry}

 \addtolength{\topmargin}{-2cm}
\addtolength{\textheight}{3.5cm} \addtolength{\oddsidemargin}{-1cm}
\addtolength{\textwidth}{1.5cm} \addtolength{\footskip}{0.7cm}


\font\german=eufm10 at 10pt \def\Buchstabe#1{{\hbox{\german #1}}}
\def\EA{\Buchstabe{A}} 

\newcommand{\cC}{\mathcal{C}}
\newcommand{\tcC}{\widetilde{\mathcal{C}}}
\newcommand{\tcCb}{\widetilde{\mathcal{C}}_b}

\newcommand{\cH}{\mathcal{H}}
\newcommand{\cM}{\mathcal{M}}

\newcommand{\cS}{\mathcal{S}}
\newcommand{\cT}{\mathcal{T}}

\newcommand{\QAB}{Q^{1+AB}}

\newcommand{\la}{\langle}
\newcommand{\ra}{\rangle}
\newcommand{\tcM}{\widetilde{\mathcal{M}}}
\newcommand{\tcMb}{\widetilde{\mathcal{M}}_b}
\newcommand{\Om}{\Omega}

 at10pt

\begin{document}

\title {A histories perspective on characterising quantum non-locality}


 \author{Fay Dowker\footnote{Imperial College, Blackett Laboratory, Prince Consort Road, London, SW7 2BZ, United Kingdom.}, Joe Henson$^*$,and Petros Wallden\footnote{Heriot-Watt University, Edinburgh EH14 1AS, United Kingdom; University of Athens, Physics Department, Panepistimiopolis 157-71, Athens, Greece.} }

  \maketitle
\begin{abstract}

We introduce a framework for studying non-locality and contextuality
inspired by the path integral formulation of quantum theory. We prove that 
 the existence of a strongly positive joint quantum measure  -- the quantum analogue of 
 a joint probability measure -- on a set of experimental probabilities implies the Navascues-Pironio-Acin
 (NPA) condition $Q^1$ and is implied by the stronger NPA condition $Q^{1+AB}$.
 A related condition is shown to be  equivalent to $Q^{1+AB}$. 
  

\end{abstract}
\vskip 1cm

\section{Introduction}

The phenomenon now commonly referred to
as non-locality was, from the early days of quantum mechanics, 
central to debates on the adequacy of the standard formalism of quantum theory and other conceptual issues (see for example \cite{Bacciagaluppi:2006dn}).  More recently, there has been 
interest in probing the exact quantitative extent of non-locality in 
quantum theory.  For, while ordinary quantum models of experiments can violate Bell's local causality condition \cite{Bell:1990}, they do not allow all sets of experimental probabilities or ``behaviours'' \cite{Tsirelson:1980, Tsirelson:1987, Tsirelson:1993} consistent with the weaker, operational condition, ``no superluminal signalling''   \cite{Popescu:1994}. A research program of  ``characterising quantum non-locality'' has arisen with two closely related goals:  to provide a method of determining whether a given experimental behaviour could have been produced by an ordinary quantum model, and  to discover physical or information-theoretic principles that result in constraints on possible behaviours.

Building on the pioneering work of Tsirelson,  Navascues, Pironio and Acin (NPA) made substantial progress towards the first goal. NPA introduced a sequence of conditions of increasing strength, satisfaction of all of which implies the existence of an ordinary quantum model for a behaviour. Most 
importantly, each condition can be decided -- given unlimited computing power -- using standard programming techniques \cite{Navascues:2007,Navascues:2008}.    As to the second goal, a number of proposals have been made for information-theoretic principles that result in bounds on behaviours, including ``non-triviality of information complexity'' \cite{vanDam:2005}, ``no advantage for non-local computation'' \cite{Linden:2007}, ``information causality'' \cite{Pawlowski:2009}, ``macroscopic locality'' \cite{navascues:2009}, 
and ``local orthogonality''/``consistent exclusivity'' \cite{Foulis:1981, Fritz:2012, Cabello:2012}.  

Progress on these issues could provide a more general understanding of the limitations on measures of success for quantum information processing tasks, along with new ways of bounding them. It could also throw light on the conceptual issues mentioned above.  One view of the conflict between locality and quantum theory is that we should seek to uphold some essential feature of Bell's notion of local causality beyond mere no-signalling, while shedding some (at present unidentified) conceptual ``excess baggage.'' Drawing an analogy with the apparent conflict between relativity of motion and constancy of the speed of light which resolved itself in the theory of relativity, the hope is that a significant gain in understanding might result from this rehabilitation of causal ideas in quantum theory: see, for example,  \cite{Leifer:2011} for an approach to quantum theory influenced by such considerations.  Developing a understanding of what restrictions on non-locality remain in quantum theory could aid this endeavour.

In the meantime, related questions have arisen more or less independently in work on the problem of quantum gravity in a path integral framework.  A number of quantum gravity researchers have argued that, rather than trying to canonically quantise 
GR, the best prospects for success lie in formulating quantum theory in 
a fundamentally relativistic way using the path integral (see \textit{e.g.} \cite{Hawking:1978jz, Hartle:1989, Sorkin:1997gi}). To understand the reasoning behind this, recall that the 
Hilbert space formalism for quantum theory arises from 
canonical quantisation of the Hamiltonian form of classical mechanics: in order to perform a canonical quantisation of general relativity (GR) it must be formulated as a Hamiltonian theory based on a foliation of spacetime with spacelike hypersurfaces of constant time coordinate. 
General covariance implies that such a time coordinate and such hypersurfaces 
are not physical and the canonical quantisation of this Hamiltonian form of GR leads to the infamous ``problem of time'' \cite{Isham:1992, Kuchar:2011} or, more accurately, multiple \textit{problems} of time 
\cite{SorkinPirsa:2008}. 
These difficulties suggest that quantum gravity requires a formulation of quantum theory
based fundamentally on physical, covariant, spacetime quantities. Early in the history of QM,
Dirac \cite{Dirac:1933} argued that the canonical theory, based as it is on a choice of time variable, is 
essentially non-relativistic and that the alternative is to base quantum mechanics on the Lagrangian form of 
classical mechanics which ``can easily be expressed relativistically on account of the
action functional being a relativistic invariant.'' Dirac showed that this leads to the path integral or sum-over-histories which was brought to prominence by Feynman in his paper, ``A \textbf{spacetime} approach  to non-relativistic quantum mechanics'' (our emphasis) \cite{Feynman:1948}. 
An approach to quantum theory based on the path integral and free from any fundamental reference to states or measurements was pioneered by Caves, Hartle, Sorkin and others \cite{Caves:1986, Gell-Mann:1989nu, Hartle:1992as, Hartle:1991bb, Sorkin:1994dt}.   The many advantages for quantum gravity of a histories approach over a canonical approach have been set out by Sorkin \cite{Sorkin:1997gi}.  

In quantum gravity we must decide which principles from current theories to preserve and which to jettison in the search for the new theory and a principle of relativistic causality would 
seem to be an excellent candidate for keeping. 
However, the explicitly operational,  no signalling condition is of no use in quantum theories of closed systems such as the whole universe as often considered in quantum gravity. 
 A principle analogous to Bell's local causality, suitably generalised to apply in quantum theories of closed systems, which bans superluminal signalling as a
special case and yet allows violations of the Bell inequalities, is what is needed \cite{Henson:2005wb}. Most concretely, 
in causal set quantum gravity  Bell's local causality has been applied as a constraint on classical stochastic dynamical models of a growing discrete spacetime (causal set), with promising and suggestive results \cite{Rideout:2000a, Sorkin:1998hi}. If we can find a principle of \textit{quantum Bell causality},  quantum dynamical models of discrete causal set spacetimes 
could then be constructed using a similar approach.

 At this point, this important question in this approach to quantum gravity intersects with the 
program of characterising quantum non-locality in quantum information research
in the following way.
There is a trio of equivalent conditions on a behaviour in 
the Clauser-Horne-Shimony-Holt (CHSH) scenario: the existence of a Bell locally causal model  (also known as a ``classical 
screening off model'') of the probabilities, the satisfaction of all the CHSH (Bell) inequalities 
 \cite{Clauser:1969a} and the existence of a joint probability measure on all the possible (counterfactual) outcomes \cite{Fine:1982, Craig:2006ny}. This is also the case for more general experimental scenarios, with appropriate analogues, usually referred to simply as Bell inequalities, of the
CHSH inequalities \cite{Werner:2001}.  Thus, Bell's local causality, obedience of certain bounds on experimental probabilities and a non-contextuality condition are equivalent.\footnotemark 
The path integral approach to QM reveals quantum theories to be generalised measure theories  in which probability measures are generalised to  \textit{quantum measures} by weakening the Kolmogorov sum rule to an analogous rule that expresses the absence of genuine three-way interference between alternative histories for a system \cite{Sorkin:1994dt}. This immediately raises the question of what becomes of the ``Fine trio'' of equivalent conditions -- local causality (classical screening off), bounds on experimental  probabilities and non-contextuality -- when we enlarge the category of physical models we 
consider to ones in which dynamics is represented by quantum measures.  
A natural analogue of non-contextuality is the existence of a \textit {joint quantum measure} over all the  outcomes \cite{Craig:2006ny}. The
existence of such a measure in the CHSH scenario implies no restriction on the 
experimental probabilities,
indeed a Popescu-Rohrlich (PR) box is possible \cite{Craig:2006ny,Barnett:2007}. However, requiring the joint quantum measure to be \textit{strongly positive} implies the Tsirelson inequalities on the correlations \cite{Craig:2006ny,Barnett:2007} and
thus we have the beginnings of a quantum analogue of the Fine trio, with an
implication of the form  ``joint measure implies bounds on probabilities.'' This raises the hope that 
this will be a guide to discovering the quantum Bell causality condition, the
most physically important condition in the trio. In particular, quantum Bell causality should
be a condition that implies the existence of a  strongly positive joint quantum measure. 
One proposal for such a condition made in \cite{Craig:2006ny} has not, thus far, been 
successful in this regard.

 \footnotetext{This equivalence is the reason that any behaviour that violates a
 Bell inequality is commonly called ``non-local''  and why phrases such as ``characterising non-locality''
have come to be used for seeking and studying conditions that limit Bell inequality-violating behaviours. We have adopted this terminology in this paper in order to make contact with much of the literature but we note here that saying that a behaviour is non-local is a loose short hand for the precise meaning that the 
experimental probabilities cannot arise from a Bell locally causal model.}

In this article we continue the investigation of the quantum Fine trio. We have not 
achieved the goal of discovering quantum Bell causality but report on 
progress in understanding the quantitative strength of the condition of 
the existence of a strongly positive joint quantum measure.
We use the
the machinery of the NPA hierarchy and  show
that the strongly positive joint quantum measure condition closely resembles conditions from the 
hierarchy.  This allows the set of behaviours satisfying the 
quantum measure condition to be positioned rather precisely within the hierarchy.

\section{Joint measurement scenarios}
\label{s:framework}
 
We describe here the framework of \textit{joint measurement scenarios}.
It makes contact with the path integral/sum-over-histories approach to quantum mechanics, is suitable for discussing and resolving questions of common interest 
between quantum information, foundations of quantum mechanics and 
quantum gravity and is closely related to the formalism proposed by Liang, Spekkens and Wiseman  (LSW) \cite{Liang:2011}.The framework can be conceived of 
purely operationally but the measurement outcomes which which it deals
can also be thought of as physical events which are to be (eventually)
included in the physics -- quantum or otherwise. 

In this conceptual framework, there is a set of devices which can perform measurements.  To each
measurement is associated an exhaustive set of possible, exclusive outcomes.  Certain sets of measurements are compatible, that is they can be jointly performed.  It may be useful to have in mind, for instance, spin measurements on a collection of atoms or polarisation measurements on some photons.  Let $\cM := \{M_1, M_2, \dots M_p\}$ be the set of possible \textit{basic measurements}. Let  $\Xi_i$ be the set of labels for the possible outcomes for the measurement $M_i$.  We will assume $\Xi_i$ has finite cardinality for all $i= 1,\dots p$.
We introduce the \textit{non-contextuality space} (NC space),  $\Xi$, an element of which specifies an outcome for each of the measurements $M_i$ even though they cannot, in general, be jointly performed:
\begin{equation}
\label{e:xi_def}
\Xi:=\Xi_1 \times \Xi_2 \times ... \times \Xi_p\,.
\end{equation}

Each basic measurement, $M_i$, corresponds to a partition of $\Xi$.  Each possible outcome of the measurement $M_i$ will be represented by one of the subsets in this partition, $X \in M_i$. 
Note that we refer to the measurement 
and the partition with the same symbol. 
For a basic measurement $M_i$, an outcome $X_i\in M_i$   is a set of the form
\begin{equation}
\label{e:basic_outcome}
X_i  =\Xi_1 \times  \dots \Xi_{i-1}\times \{ a_i \} \times \Xi_{i+1} \dots \times \Xi_p \, \subset \, \Xi\,,
\end{equation}
where $a_i \in \Xi_i$.  The set of all such subsets, ranging over all 
elements of $\Xi_i$,  is the partition of $\Xi$ identified with basic measurement $M_i$.

A joint measurement is specified by a set of basic measurements $\{M_{i_1},M_{i_2},\dots M_{i_n}\} $
 and the partition, 
$M$, corresponding to the joint measurement, is the set of all sets of the form
\begin{equation}
\label{e:joint_outcome}
X_{i_1} \cap X_{i_2} \cap \dots X_{i_n}\,,
\end{equation}
where $X_{i_k} \in M_{i_k}$.  Note that each outcome of one of the basic measurements $M_{i_k}$ 
 is a union of outcomes for 
the  joint measurement $M$. Also, for each joint outcome $X\in M$, and each outcome $X_{i_k}$ of basic
measurement $M_{i_k}$,  either $X \cap X_{i_k} = X$ or $X \cap X_{i_k} = \emptyset$. 

A \textit{joint measurement scenario}, based on the set of basic measurements $\cM$ and NC space $\Xi$, 
is fixed by specifying a collection of joint measurements,  to be thought of as those joint measurements which are physically possible to do. We will denote the set of all joint measurements in the scenario by $\tcM$. 
Each measurement in $\tcM$ corresponds to a subset of $\cM$. All the 
basic measurements are elements of $\tcM$ and
if there is a joint measurement in $\tcM$ of a set $K$ of basic measurements
then there is a joint measurement in $\tcM$ of each subset of $K$. 
Endowing $\tcM$ with the corresponding
order by inclusion and adding the empty measurement to $\tcM$ makes $\tcM$ 
a meet semilattice.  A maximal joint measurement is a 
measurement that is maximal in this order and the basic measurements are minimal in the order if the empty measurement is
removed.  We say that measurement $M$ is included in measurement 
$M'$ if  $M$ precedes $M'$ in the order. 
We say that a collection of measurements in $\tcM$ can be \textit{jointly performed}  if there is a joint measurement in $\tcM$ which includes them all.

Note that this formalism already includes some kind of non-contextuality assumption: 
an outcome, $X$, of a basic measurement, $M_i$ is dealt with by the formalism 
as the \textit{same} event, no matter which 
joint measurement $M_i$ is included in. This would be physically most justifiable were 
the basic measurements all to have specified spacetime locations such that 
jointly performable basic measurements are in different spacetime locations. Then the
non-contextuality is justified by a form of locality, 
namely the assumption that the basic measurement outcomes are \textit{local events}
whose description is in terms of local variables associated to the 
spacetime region in which the outcome occurs.\footnotemark  If the  situation 
being modelled is supposed \textit{not} to be of this form then the 
non-contextuality assumption inherent in the formalism has to be justified in some other 
way.  We will return to these considerations later. 

\footnotetext{Note that this assumption does not require that these local 
variables exhaust all there is to say in physics -- there may also be 
nonlocal physical variables -- all it requires is that there \textit{are} local variables
and local events and that the outcomes are local events \cite{Henson:2013xx}.}

For each measurement $M\in \tcM$, there is a set of outcomes which are the elements of the 
partition $M$. We will refer to these as the fine-grained outcomes of 
$M$.   We consider also all \textit{coarse-grained outcomes} 
\textit{i.e.} all unions of the fine-grained outcomes of $M$. For each $M\in \tcM$ the set of all coarse-grained outcomes
together with the empty outcome form  the (Boolean) sigma algebra, $\EA_M$, generated by the fine-grained outcomes. We will use the term sigma algebra even though the case in hand is finite because one
might want to extend the formalism to infinite scenarios. 
The union of the algebras of outcomes over all measurements in a scenario will be denoted $\tcC := 
\bigcup_{M\in \tcM} \,\EA_M$
and the union of the algebras over all basic measurements will be denoted $\cC := 
\bigcup_{i = 1}^p \, \EA_{M_i}$.

Both $\tcC$ and $\cC$ are subsets of the power set, $2^\Xi$ of $\Xi$ but neither are sigma algebras: in 
general they are not closed under union or intersection. While  $\tcC$ is a poset, ordered by inclusion, it is not a lattice.  However,  $\tcC$ is an orthocomplemented poset, meaning for each outcome $X \in \tcC$ there exists $\bar{X} \in \tcC$ such that: (i) $\bar{\bar{X}}=X$; (ii) $X \subset Y$ implies $\bar{Y} \subset \bar{X}$; (iii) $X \subset Y$ and $\bar{X} \subset Y$ implies $Y=\Xi$.

The NC space $\Xi$ together with the set $\tcM$ of all (doable) measurements specifies the joint measurement scenario $\cS=\{\Xi, \tcM\}$. This framework is neither the most restrictive nor the most general that one
might want to consider\footnotemark.  For example, one can imagine putting further constraints on the set of measurements, $\tcM$,
such as, if all pairs of measurements in a set of measurements $K\subset \cM$ are jointly 
performable, then  $K$ is jointly performable \cite{Randall:1974, Fritz:2012, Cabello:2012}. One could consider more general NC spaces and/or more general partitions of the NC space as measurements. 
For example, one could consider scenarios in which some measurement choices can depend on the 
outcomes of other measurements, something 
we will investigate in section \ref{s:branching_measurements}.

\footnotetext{In the Fritz-Leverrier-Sainz (FLS) framework \cite{Fritz:2013}, as in some earlier treatments \cite{Foulis:1981,Cabello:2010}, the basic structure is a set of outcomes along with a set of subsets of the outcomes representing measurements.  While the NC space could be defined as the set of ``deterministic probabilistic models'' (as defined by FLS) for an FLS scenario, measurements and their outcomes cannot always be expressed in terms of this set as they are above (it can be empty for valid FLS scenarios). There are more other differences in approach, most importantly that the FLS framework is not defined in terms of basic and joint measurements.  Instead, the no-signalling and the commutation requirements for quantum models are enforced in non-locality scenarios by adding extra measurements.  While this gives a simple and powerful formalism, without adding more structure it is difficult to define \textit{e.g.} the original NPA set $Q^1$, or to discuss different versions of our quantum measure-based conditions.  \label{f:fls}}

An \textit{experimental behaviour} is a joint measurement scenario $(\Xi, \tcM)$ together with a probability measure 
$P(\cdot | M)$ on the sigma algebra $\EA_M$, for each $M\in \tcM$. 
The probabilities in an experimental behaviour are required to be consistent \textit{i.e}, if an outcome $X \in \tcC$
is an element of $\EA_M$ and of $\EA_{M'}$ then 
$P(X|M) = P(X|M')$. 
A behaviour is therefore equivalent to a consistent \textit{probability function}  $P(\cdot)$ whose domain is the set of outcomes $\tcC$.
 $P$ is not a probability \textit{measure} because the set of outcomes $\tcC$ is not a sigma algebra, however 
 it is a measure when restricted to any algebra of outcomes $\EA_M$ for a measurement $M$.
$P(\cdot)$ can be referred to as the (set of) \textit{experimental probabilities}. 

The consistency of the 
probability function is a further non-contextuality  assumption beyond 
the one mentioned above. There it was stressed that the formalism treats the outcome, $X$, of a measurement, $M$, as the same event no matter what 
measurement, $M'$, 
$M$ may be included in. Now we are assuming that the probability of $X$ is the same 
no matter what joint measurement  $M$ is included in.  This would be most 
strongly, physically justified if the basic measurements have specific spacetime locations
and those locations are such that any jointly performable basic measurements are 
spacelike separated from each other. Then the 
consistency of the probability function would be justified by 
relativistic causality: it is a no-signalling 
condition.  Again, if the situation being modelled does not have this 
underlying spacetime structure,
then the assumption of the consistency of the probability function 
would have to be argued for in some other way.

  
 A key question in the analysis of contextuality and non-locality takes the form, in this 
 framework, of whether a given behaviour $(\Xi, \tcM, P)$ can be consistently extended in some way to
 a larger subset of $2^\Xi$ than $\tcC$. 
 
 \paragraph{Example: Bell scenarios} 
 To make contact with the familiar special case of non-locality setups we will need some additional concepts.

\textit{Composition} of a set of scenarios $\cT=$$\{\cS_1,\cS_1,...,\cS_n\}$ into one $\cS(\cT)=$$\{\Xi_\cT, \tcM_\cT \}$ can be defined if we are modelling a situation in which the choice of measurements on one system does not affect what can be done with the others: a particularly natural assumption if the systems are spacelike separated for the period of measurement.  The NC space in this case is 
\begin{equation}
\Xi_\cT = \Xi_{\cS_1} \times \Xi_{\cS_2} \times ... \times \Xi_{\cS_n}
\end{equation}
where  $\Xi_{\cS_i}$ is the NC space for scenario $\cS_i$.  For each outcome $X$ of a measurement $M$ in a scenario $\cS_i \in \cT$ there is a set $X_\cT \in \Xi_\cT$ defined as 
\begin{equation}
X_\cT =  \Xi_{\cS_1} \times ... \times \Xi_{\cS_{i-1}} \times X \times \Xi_{\cS_{i+1}} \times ... \times \Xi_{\cS_n},
\end{equation}
and the set
\begin{equation}
M_{\cT}=\{X_\cT\}_{X\in M}
\end{equation}
is a new measurement partition $M_\cT$ on $\Xi_\cT$. The set of measurements $\tcM_\cT$ contains all of these measurements for all members of $\cT$.  Because the operations on one system do not affect the possibilities of measurement elsewhere, any set of these measurements all of which come from different scenarios in $\cT$ can be jointly performed, and there are no further measurements in $\tcM_\cT$. 

\textit{Bell scenarios} are a special type of joint measurement scenario suitable for the study of non-locality.  First, consider a simple measurement scenario with $m$ mutually incompatible measurements, each with $d$ outcomes.  Composing $n$ of these ``wings'' as above gives a Bell scenario, specified by the triple $(n,m,d)$. Each ``local'' outcome for each wing $X_{a_i x_i}$ is thus uniquely specified by two integer indices, $x_i$ for the measurement ``setting'' choice and $a_i$ for the outcome, where $i$ labels the wing.  Standard notation refers to such outcomes simply as $(a_i|x_i)$.  Similarly, ``global'' outcomes of a joint measurement on all systems are refereed to as $(a_1,a_2,...,a_n\, |\, x_1,x_2,...,x_n)$.  The CHSH scenario, specified by $(2,2,2)$, is the simplest interesting Bell scenario.  It is the most well-studied example, and also the one to which some previous results exclusively apply.

For Bell scenarios, consistency of the probabilities of outcomes over different measurements is equivalent to no signalling: the probabilities for outcomes of a measurement will be independent of any measurement choices made at spacelike separation.


\section{Conditions on behaviours}

\subsection{The \textit{JPM} condition, non-contextuality and Bell's Local Causality}
\label{s:jqm}
A behaviour gives probabilities for the outcomes of doable experiments. Let us 
now consider a \textit{joint probability measure} $P_J$ on the sigma algebra, $\EA$, generated by $\tcC$ and
 define the set of behaviours consistent with the existence of such a measure. 
 In our case,  $\EA = 2^\Xi$ since $\Xi$ is finite.

\begin{definition} (JPM)
\label{d:jpm}
A behaviour $(\Xi, \tcM, P)$  is in the set \textit{JPM} if there exists a joint probability measure $P_J$ on the sigma 
algebra $2^\Xi$ such that\begin{equation}
P(X) = P_J(X), \quad \forall X \in \tcC\,. \label{e:joint_measure} 
\end{equation}
\end{definition}

Since $2^\Xi$ is finite, $P_J$ is fixed by its values for the atoms of the algebra $2^\Xi$, \textit{i.e.} the singleton sets corresponding to the elements of 
$\Xi$ itself: $P_J(\{\gamma\})$, $\gamma\in \Xi$.  So, if a behaviour is in JPM then 
$P(X) = \sum_{\gamma \in X} P_J(\{\gamma\})$. \textit{JPM} is therefore a non-contextuality condition. Each measurement outcome could be said to be determined by some pre-existing physical property which is not affected by, but merely revealed by the particular measurement performed.  The measure $P_J$ can be interpreted as a measure over these properties.  In the language of stochastic processes, outcomes are just some of the \textit{events}, subsets of the ``sample space'' $\Xi$, \textit{all} of which are now covered by one probability measure.

For Bell scenarios, the \textit{JPM} condition is equivalent to the existence of a Bell locally causal model \cite{Bell:1990} of the
probabilities. For the CHSH setup this was proved by Fine \cite{Fine:1982} (see \cite{Werner:2001} for a generalisation to all Bell scenarios, and further references). The JPM condition for Bell scenarios is also equivalent to the satisfaction of a set of (scenario dependent) linear inequalities between experimental probabilities, generically known as ``Bell inequalities'' \cite{Werner:2001}.   The equivalence between non-contextuality, Bell's local causality and constraints on experimental behaviours inspires our proposals at the quantum level, where we seek analogues of these conditions.

\subsection{Ordinary quantum behaviours}

The set of behaviours \textit{Q} that can be derived from ordinary quantum models is of obvious interest.  In the following, it will be useful to recall that the expression $X\in M$ identifies $X$ as
a fine-grained outcome of the measurement $M$ ($X$ is a subset of $\Xi$ in the measurement partition $M$) while $X\in \EA_M$ refers to any outcome in the Boolean sigma algebra generated by the fine-grained outcomes. 
  
\begin{definition} (Q)
\label{d:q}
A behaviour $(\Xi, \tcM, P)$ is in the set \textit{Q} if there exists an ordinary quantum model for the behaviour, that is, a Hilbert space $\cH$, a pure state $|\psi \ra \in \cH$, and a projection operator $E^X$ on $\cH$ for each fine-grained outcome, $X$, of each basic measurement such that
\begin{itemize}
\item For each basic measurement $M_i\in\cM$,
\begin{equation}
\label{e:qdef_projsum}
\sum_{X \in {M_i}} E^{X}=\mathbf{1}
\end{equation}
where $\mathbf{1}$ is the identity on $\cH$;
\item If basic measurements $M_i,M_j\in \cM$ can be jointly performed, then
\begin{equation}
\label{e:q_commute}
[E^{X},E^{Y}]=0 \quad \forall \, X \in M_i, Y \in M_j\,;
\end{equation}
\item
 If outcome $X$ of a joint measurement of the set of
basic measurements $\{M_{i_1},M_{i_2},...M_{i_n}\}$ is $X = \cap_{k = 1}^n X_{i_k}$
where $ X_{i_k} \in M_{i_k}$ then 
\begin{equation}
\label{e:oqb_probs} 
\la \psi | E^{X_{i_1}}E^{X_{i_2}}...E^{X_{i_n}} |\psi \ra = P(X)\,.
\end{equation}
\end{itemize}
\end{definition}

It follows that, for all basic measurements $M_i \in \cM$, the projectors defining an ordinary quantum model
satisfy:
\begin{equation}
\label{e:q_orthogonality}
E^X E^Y = \delta_{X Y} E^X \quad \forall \, X,Y\in M_i\, ,
\end{equation}
Since, from (\ref{e:qdef_projsum}) we have that $\sum_{Y \in M} E^X E^Y E^X = E^X$ for all $X \in M$, from which it follows that $E^X E^Y E^X =0$ for all $X \neq Y$, because all these terms are positive operators.

\begin{lemma}
\label{l:q2}
A behaviour $(\Xi, \tcM, P)$ is in the set \textit{Q} if and only if 
there exists a Hilbert space $\cH$, a pure state $|\psi \ra \in \cH$, and a map 
$E: \tcC \rightarrow L$, from the set of outcomes to the lattice, $L$, of projection operators 
on $\cH$ (equivalently, the lattice of subspaces of $\cH$) such that 
\begin{itemize}
\item
for each measurement $M\in\tcM$,  the image of $E$ restricted to $\EA_M$, 
$E(\EA_M)$, is a Boolean algebra and the restriction of $E$,
\begin{equation*}
E: \, \EA_M \rightarrow E(\EA_M)\,,
\end{equation*}
is a Boolean algebra homomorphism\,;
\item the probabilities are given by
\begin{equation}
\label{e:oqb_lattice_probs} 
 \la \psi | E(X) |\psi \ra = P(X), \ \ \forall X \in \tcC\,.
\end{equation}
\end{itemize}
\end{lemma}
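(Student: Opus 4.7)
The plan is to construct the map $E$ explicitly from a given quantum model for the forward direction, and to read off a quantum model from a given map for the reverse direction.

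\emph{Forward direction:} Given $(\cH, \ket{\psi}, \{E^X\})$ as in Definition~\ref{d:q}, for each joint measurement $M \in \tcM$ that is the joint of basic measurements $\{M_{i_1}, \ldots, M_{i_n}\}$, I define $E$ on fine-grained outcomes by $E(X_{i_1} \cap \cdots \cap X_{i_n}) := E^{X_{i_1}} \cdots E^{X_{i_n}}$, which is a projection since the factors commute by (\ref{e:q_commute}); I then extend additively to the full $\EA_M$. The principal obstacle is well-definedness: I must show that the two definitions of $E(Y)$ agree whenever $Y \in \EA_M \cap \EA_{M'}$. Writing $M$ and $M'$ as joints over basic-measurement index sets $K$ and $K'$ respectively, elements of $\EA_M$ are precisely the subsets of $\Xi$ that depend only on the coordinates in $K$, so a $Y$ lying in both $\EA_M$ and $\EA_{M'}$ depends only on coordinates in $K \cap K'$. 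Expanding $E(Y)$ via the fine-grained outcomes of $M$ and using the commutativity of all projectors associated to basic measurements in $K$, the sums over the coordinates in $K \setminus K'$ telescope via $\sum_{X \in M_{i}} E^X = \mathbf{1}$, leaving an expression involving only the projectors for the basic measurements in $K \cap K'$; the analogous reduction for $M'$ yields the identical expression.

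With well-definedness in hand, the homomorphism property on each $\EA_M$ follows because the projectors $\{E(X) : X \in M\}$ commute pairwise, so the set they generate is a Boolean subalgebra of $L$ with meet $PQ$ and join $P + Q - PQ$; the additive extension then faithfully mirrors the set-theoretic operations on $\EA_M$. The probability identity $\la \psi | E(X) | \psi \ra = P(X)$ is exactly (\ref{e:oqb_probs}) for fine-grained $X$, and extends to $\tcC$ by additivity of both sides.

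\emph{Reverse direction:} Given $E: \tcC \to L$ satisfying the stated conditions, set $E^X := E(X)$ for each fine-grained outcome $X$ of a basic measurement. The homomorphism on $\EA_{M_i}$ gives $\sum_{X \in M_i} E^X = E(\Xi)$, and $\la \psi | E(\Xi) | \psi \ra = P(\Xi) = 1$ forces $\ket{\psi}$ to lie in the range of $E(\Xi)$; restricting the Hilbert space to this range replaces $E(\Xi)$ by $\mathbf{1}$, establishing (\ref{e:qdef_projsum}). Commutativity (\ref{e:q_commute}) for jointly performable $M_i, M_j$ holds because $E(X)$ and $E(Y)$ both lie in the Boolean subalgebra $E(\EA_M) \subset L$ for any joint measurement $M$ containing both. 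Finally, for a joint outcome $X = \cap_k X_{i_k}$, the product $E^{X_{i_1}} \cdots E^{X_{i_n}}$ equals the meet of commuting projectors, which equals $E(\cap_k X_{i_k}) = E(X)$ by the homomorphism property, so (\ref{e:oqb_probs}) follows from (\ref{e:oqb_lattice_probs}).
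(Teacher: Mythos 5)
Your proof is correct and takes essentially the same route as the paper's: define $E$ on a fine-grained joint outcome as the product of the commuting basic-measurement projectors, extend additively over each $\EA_M$, and for the converse read off the basic-measurement projectors from $E$, using the fact that all projectors in a Boolean subalgebra of $L$ commute. The only differences are matters of detail: you make explicit the well-definedness of $E$ on outcomes shared by several measurement algebras (via the telescoping identity $\sum_{X\in M_i}E^X=\mathbf{1}$), which the paper leaves implicit, and you handle a possibly non-unital $E(\Xi)$ by restricting to its range, whereas the paper simply adopts the convention that a Boolean algebra homomorphism sends the unit to $\mathbf{1}$.
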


\begin{proof} If the behaviour is in \textit{Q} then we have projection operators for each of the 
fine-grained outcomes, $X_{i_k}$, of each basic measurement
so define $E(X_{i_k}) : = E^{X_{i_k}}$. We then define a projector for a 
fine-grained outcome of a joint measurement by forming  the product of the 
projectors for the relevant basic measurement outcomes: 
for outcome $X$ of a joint measurement of the set of
basic measurements $\{M_{i_1},M_{i_2},...M_{i_n}\}$ is $X = \cap_{k = 1}^n X_{i_k}$
where $ X_{i_k} \in M_{i_k}$,
\begin{equation}
\label{e:joint_projector}
E(X):= E^{X_{i_1}}E^{X_{i_2}}...E^{X_{i_n}}\,,
\end{equation}
where the $E^{X_{i_k}}$ all commute and so this is indeed a projector.
Finally, 
the projector, $E(Y)$ for a coarse-grained outcome, $Y\in \EA_M$, is defined to be the sum of the projectors for the 
fine-grained outcomes of $M$ of which $Y$ is the union. This defines a map $E: \tcC \rightarrow L$
which has the properties above, by construction. 

If the map $E$ exists with the properties above, the projectors $E(X)$  for the 
fine-grained  outcomes of the basic measurements will satisfy the conditions in 
definition \ref{d:q}. Note here that we are using the definition of Boolean algebra homomorphism
which includes the condition that the unit element is mapped to the unit element. 
\end{proof} 

We will refer to such a Hilbert space and map $E$ as an \textit{ordinary quantum model} for the behaviour. 
Note that, throughout this paper, we 
assiduously refer  to such quantum models as ``ordinary''  in order to distinguish them from other models that, from a histories 
perspective, are also classified as ``quantum'', in particular  the quantum measure theory models introduced below. 

In Bell scenarios, the condition in definition \ref{d:q}, equation (\ref{e:q_commute}), translates to the requirement that the projectors for spacelike
separated measurements commute. The definition of ordinary quantum behaviours here follows the definition given in \cite{Navascues:2008} 
in which it is not required that the Hilbert space be a tensor product. See that reference for a discussion of the fact that it is not known in general whether requiring a tensor product Hilbert space is a strictly stronger condition (``Tsirelson's problem'').

A number of questions arise regarding \textit{Q}, including: what assumptions are sufficient to determine the shape of \textit{Q}; can we derive the condition based on simple, more-or-less physical criteria rather than by invoking the above definition, which is considerably less conceptually simple than the definition of \textit{JPM}; does the set of allowed behaviours remain the same when an alternative framework for quantum mechanics is considered, such as the path integral?

\subsection{NPA conditions}

Tsirelson pioneered the study of bounds on  \textit{Q}, which
can be thought of as defining sets of behaviours containing \textit{Q} \cite{Tsirelson:1980, Tsirelson:1987, Tsirelson:1993}. 
More recently, Navascues, Pironio and Acin  introduced a hierarchy of conditions 
resulting in a nested sequence of sets of behaviours for bipartite Bell scenarios which contain and converge to \textit{Q} \cite{Navascues:2007,Navascues:2008}.  The conditions defining these sets can be cast as semidefinite programs which facilitates solution by computer, and allows \textit{e.g.}~estimation of violations of Bell inequalities in QM, beyond the CHSH scenario.  Rather than review the whole hierarchy in detail here, two conditions that will be of particular relevance will be defined.  The definitions have been generalised to joint measurement scenarios but are equivalent to the original ones 
given by NPA when we restrict to bipartite Bell scenarios.

The set of behaviours $Q^1$ is the first in the original NPA hierarchy. 
 The basic observation used to define this set is that, in ordinary quantum models, the inner products between 
all pairs of vectors from the set $\{ |\psi\ra\}\cup\{E^X|\psi\ra\,; \, \mbox{ $X$ an outcome of a basic measurement}\}$  must obey certain conditions, including relations to the experimental probabilities. 
 If a model is in \textit{Q}, there must therefore exist a Hilbert space and vectors in it satisfying these conditions.  The set of behaviours obeying this condition is called $Q^1$.  This set of behaviours has received substantial attention in the literature.  In the CHSH scenario it is equivalent to a simple bound, and the maximum violation of several Bell inequalities for $Q^1$ behaviours are known \cite{Navascues:2008}.  It is also equivalent to the ``macroscopic locality'' condition proposed by Navascues and Wunderlich \cite{navascues:2009}. 
 \begin{definition} ($Q^1$) 
\label{d:q1}
A behaviour $(\Xi, \tcM, P)$ is in $Q^1$ if there exists a Hilbert space $\cH$ and vectors $\{ |X \ra \}_{X \in \cC}$ 
indexed by the coarse-grained outcomes of the \textbf{basic} measurements, that span $\cH$, and satisfy the following conditions:
\begin{itemize}
\item[(i)]  if $M_i$ is a basic measurement and $X,Y \in \EA_{M_i}$ are disjoint outcomes, $X\cap Y = \emptyset$, then
\begin{equation}
\label{e:q1sum}
| X \cup Y \ra= |X\ra + |Y \ra\,;
\end{equation}
\item[(ii)]  if basic measurements $M_i,M_j \in \cM$  are jointly performable and $X \in \EA_{M_i}$ and $Y \in \EA_{M_j}$, then
\begin{equation}
\label{e:q1joint_measurement}
\la X | Y \ra= P(X \cap Y)\;.
\end{equation}
\end{itemize}
\end{definition}

It is clear from the definition that $Q \subseteq Q^1$ and it is known that $Q^1$ is actually 
strictly larger than \textit{Q}  \cite{Navascues:2008}.

The definition of $Q^1$ is adapted directly from the NPA definition in terms of positive matrices, and is equivalent to it for bipartite Bell scenarios.
The proof is in Appendix \ref{a:npa_defs}.  For the discussion below, it will be easier to work directly with this definition expressed in terms of vectors. Note that the the condition that the vectors span $\cH$ can be dropped without altering the set $Q^1$ 
since if such a set of vectors exists that doesn't span the Hilbert space, then $\cH$  can be redefined to 
be the span of the vectors.

The basic measurements play a key role in the definition of $Q^1$ and an obvious  strengthening of the 
condition is to require that vectors corresponding to all outcomes of all joint measurements exist satisfying the appropriate conditions. This leads to 
the set of  behaviours called $\QAB$ in the NPA approach,\footnotemark  
 and is the first in the modified version of the hierarchy introduced later by Fritz, Leverrier and Sainz \cite{Fritz:2013}.  

\footnotetext{In the special case of the (2,m,d) Bell scenarios, the only joint measurements are joint measurements of one measurement on A's wing and one on B's wing of the experiment, hence the $AB$ in $Q^{1+AB}$.}

\begin{definition} ($\QAB$)
\label{d:tq1}
A behaviour $(\Xi, \tcM, P)$ is in $\QAB$ if there exists a Hilbert space $\cH$ and a set of vectors $\{ |X \ra \}_{X \in \tcC}$ 
indexed by the outcomes of the joint measurements, that span $\cH$  and satisfy the following conditions:
\begin{itemize}
\item[(i)]  if $M$ is a measurement and $X,Y \in \EA_{M}$ are disjoint outcomes, $X\cap Y = \emptyset$, then
\begin{equation}
\label{e:tq1sum}
| X \cup Y \ra= |X \ra + |Y \ra\,;
\end{equation}
\item[(ii)]  for each measurement $M\in \tcM$ and all  outcomes $X, Y \in  \EA_M$ 
\begin{equation}
\label{e:tq1joint_measurement}
\la X | Y \ra= P(X \cap Y)\; ;
\end{equation}
\item[(iii)] if  $X, Y \in \tcC$ and there exists a basic measurement $M_i$ and outcomes 
$X', Y' \in M_i$ such that $X' \cap Y' = \emptyset$, $X \subset X'$ and  $Y \subset Y'$ then 
\begin{equation}\label{e:tq1orthogonality}
\la X | Y\ra = 0\,.
\end{equation}
\end{itemize}
\end{definition}

Crucially,  condition (iii) says that two vectors corresponding to outcomes $X$ and $Y$ in $\tcC$ are orthogonal if they  imply disjoint outcomes for any basic measurement, 
\textit{even if the two outcomes $X$ and $Y$ are not compatible and are not both elements of any 
single measurement sigma algebra $\EA_M$}. 

The set $\QAB$ contains \textit{Q}, and it follows easily from the definitions that $Q^1$ contains $\QAB$.  There is strong computational evidence that these are strict containments \cite{Navascues:2008}.
The condition $\QAB$ is well-studied and has a number of remarkable properties. For instance, it can be shown to be equivalent to a bound on the Lov\'asz number of the ``orthogonality graph'' associated to the behaviour \cite{Cabello:2010, Fritz:2013}, an important quantity for classical information theory that was defined well before the advent of quantum information. Also, all of the currently proposed information-theoretic principles that restrict non-locality can be shown to be implied by the $\QAB$ bound \cite{Acin:2013}. 

Although the higher levels of the NPA hierarchy will not enter into the following discussion, note that level $n$ is equivalent to the existence of vectors with the same inner product properties as those obtained from ordinary quantum models by applying all \textit{products} of $n$ projectors to the state $|\psi\ra$ (the projectors corresponding to basic measurements for the NPA version, and joint measurements for the FLS version of the hierarchy).

\section{Conditions on behaviours from the quantum measure}

\subsection{Quantum measure theory}
\label{s:histories}

We briefly review the histories approach to quantum theory. For more details see \cite{Sorkin:1994dt,
Sorkin:1995nj, Dowker:2010ng, SorkinPathIntegral}. In this approach
quantum 
theory is understood as a species of generalised measure theory, a generalisation of classical stochastic processes
such as Brownian motion.  In histories-based theories -- classical, quantum or transquantum --
the fundamental kinematical concept is the set of
spacetime histories, $\Om$. For a particular physical system, 
each history in $\Om$ is as complete a
description of the system as is conceivable in the theory, over all time of interest.  
The kind of elements in $\Omega$ varies from theory to theory:
in $n$-particle classical or quantum mechanics, a history is a set of $n$
trajectories; in a classical or quantum scalar field theory, a history is a real or complex
function on spacetime; in GR, a history is a Lorentzian geometry. 
Discovering the appropriate set of histories $\Om$ for a particular system is part of the business of physics. 
In quantum theory, conceived of as  \textit{quantum measure theory}, 
 $\Omega$ is the set over which the integration of the path integral takes place.
Even for the non-relativistic particle, the path 
integral remains to be defined rigorously as a genuine integral over paths \cite{Sorkin:2011sp}, something that 
will not concern us here as we restrict ourselves to finite systems.

Once the set of histories has been settled upon, any proposition about
the system is represented by a subset of $\Omega$. For example
in the case of the non-relativistic particle, if $R$ is a particular spacetime region, the proposition 
``the particle traverses $R$''
corresponds to the set of all spacetime trajectories 
which intersect $R$.  We follow the standard terminology of stochastic
processes and refer to such subsets of $\Omega$ as \emph{events}.

In this framework, a physical theory based on $\Omega$ 
is expressed as a generalised measure theory, 
specified by a sigma algebra, $\EA \subseteq 2^\Omega$, of  events and a
measure,  $\mu:\EA \rightarrow \mathbb{R}$, where $\mu$ is a non-negative 
function which encodes (a combination of) the dynamics and initial condition.
Sorkin has identified a hierarchy of measure theories defined by a 
sequence of strictly weakening conditions on the measure, $\mu$, of which the first 
is the Kolmogorov sum rule for probability measures. 
Classical stochastic theories such as Brownian motion therefore lie in the first level of the Sorkin 
hierarchy and classical deterministic theories are 
special cases. 

The second level in the Sorkin hierarchy is quantum measure theory. 
A \emph{quantum measure} on an event algebra $\EA$ is a map $\mu: \EA
\to \mathbb{R}$ such that:
\begin{enumerate}
\item for all $\alpha \in \EA$,  $\mu(\alpha) \geq 0$
(\emph{Positivity});
\item for all mutually disjoint $\alpha, \beta, \gamma \in \EA$, \begin{equation}  \label{qsr}
 \mu(\alpha \cup \beta \cup \gamma) - \mu(\alpha \cup \beta) -
\mu(\beta \cup \gamma) - \mu(\gamma \cup \alpha) +\mu(\alpha) +
\mu(\beta) + \mu(\gamma) = 0
\end{equation}
({\emph{Quantum Sum Rule}});
\item $\mu(\Omega)=1$ (\emph{Normalisation}).
\end{enumerate}
The quantum sum rule expresses the lack of interference between triples of histories, just as
the classical Kolmogorov rule expresses the lack of interference between pairs of histories. The 
higher levels of the Sorkin hierarchy generalise this to lack of interference between $k$ histories. 
Each level of the hierarchy contains the previous levels, for example, a classical  stochastic level 1 theory 
satisfies the level 2 condition. 

There are, broadly, two interpretational frameworks  for quantum measure theory  in the literature. 
In both frameworks, the formalism of histories, 
events and the decoherence functional is the complete physics of the system: histories-based quantum 
theory is quantum theory of a closed system, a quantum theory without external observers, agents or measuring devices. 
Observers, agents and measuring devices, should they exist, are described within the theory. 
In  \textit{decoherent or consistent histories} \cite{Griffiths:1984rx,Omnes:1988ek,Gell-Mann:1989nu} probabilities are considered to be 
fundamental and attention is focussed on subalgebras of $\EA$ on which the quantum measure restricts to 
a probability measure. The fact that there are (infinitely) many such subalgebras must then be 
grappled with \cite{Dowker:1994dd,PhysRevLett.75.3038}. In the \textit{co-event} framework \cite{Sorkin:2006wq,Sorkin:2007}, attention is focussed on events of measure  zero and an interpretation sought based on the maxim ``events of measure zero do not happen.''

In all known quantum theories, the quantum measure is not given directly but 
via the agency of a  \emph{decoherence functional} 
on the event algebra. 
A decoherence functional is a map $D: \EA \times \EA \to \mathbb{C}$ such that \cite{Hartle:1989, Hartle:1992as}
\begin{enumerate}
\item For all $\alpha, \beta \in \EA$, we have $D(\alpha,\beta) =
D(\beta,\alpha)^*$ (\emph{Hermiticity})\,.
\item For all $\alpha, \beta, \gamma \in \EA$ with $\beta \cap \gamma
= \emptyset$, we have $D(\alpha, \beta \cup \gamma) = D(\alpha,
\beta) + D(\alpha,\gamma)$ (\emph{Bi-additivity})\,.
\item $D(\Omega, \Omega)=1$ (\emph{Normalisation})\,.
\item For all $\alpha \in \EA$,  $D(\alpha, \alpha) \ge 0 $ (Positivity)\,.
\end{enumerate}

The existence of a quantum measure is 
equivalent to the existence of a decoherence functional. 
If $D: \EA \times \EA \to \mathbb{C}$ is a decoherence functional
then the map $\mu : \EA \to \mathbb{R}$ defined by
$\mu(\alpha):=D(\alpha,\alpha)$ satisfies the 
conditions of being a quantum measure. And conversely, if $\mu$ is a 
quantum measure, there exists a (non-unique) decoherence functional 
$D$ such that $\mu(\alpha) = D(\alpha, \alpha)$ \cite{Sorkin:1994dt}.  Given this,
 we will  refer to a  decoherence functional also as a quantum measure. 
 We call a triple $(\Omega, \EA, D)$ a quantum measure system. 

It is also the case that in all quantum theories describing known physics the decoherence 
functional satisfies the condition of strong positivity \cite{Martin:2004xi}:
a decoherence functional is \textit{strongly positive} (SP) if
for any finite collection of events, $\alpha_1,\dots \alpha_N$, $\alpha_i \in \EA$,
 the $N \times N$ matrix $D(\alpha_i,\alpha_j)$ is
positive semidefinite. For the special case of finite event algebras (which is the 
case of relevance in this paper) a decoherence functional is 
strongly positive if the matrix of entries of $D$ indexed by the set of atoms of the 
algebra is a positive matrix. 

\begin{lemma}
\label{l:compose}
The product of two independent finite, strongly positive quantum measure systems is a 
strongly positive quantum measure system.
\end{lemma}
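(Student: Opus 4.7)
The plan is to make precise what the ``product of two independent quantum measure systems'' means and then reduce strong positivity to the fact that Kronecker products of positive semidefinite matrices are positive semidefinite.

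First, I would set up the product system. Given two finite strongly positive quantum measure systems $(\Omega_1, \EA_1, D_1)$ and $(\Omega_2, \EA_2, D_2)$, the product history space is $\Omega := \Omega_1 \times \Omega_2$ with event algebra $\EA := 2^{\Omega}$. The atoms of $\EA$ are pairs $(\omega_1, \omega_2)$ with $\omega_i$ an atom of $\EA_i$. I would define $D: \EA \times \EA \to \mathbb{C}$ on pairs of atoms by
\begin{equation*}
D\bigl(\{(\omega_1,\omega_2)\},\{(\omega_1',\omega_2')\}\bigr) := D_1\bigl(\{\omega_1\},\{\omega_1'\}\bigr)\, D_2\bigl(\{\omega_2\},\{\omega_2'\}\bigr),
\end{equation*}
and extend to all of $\EA \times \EA$ by bi-additivity. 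This is the natural ``independence'' condition: on product events it gives $D(\alpha_1 \times \alpha_2,\beta_1 \times \beta_2) = D_1(\alpha_1,\beta_1)\, D_2(\alpha_2,\beta_2)$.

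Next, I would verify the four decoherence functional axioms for $D$. Hermiticity and bi-additivity are immediate from the atomic definition and the corresponding properties of $D_1, D_2$. Normalisation follows since $\Omega = \Omega_1 \times \Omega_2$ decomposes into all atoms, and summing factorises: $D(\Omega,\Omega) = D_1(\Omega_1,\Omega_1)\, D_2(\Omega_2,\Omega_2) = 1$. Positivity on the diagonal will drop out for free from strong positivity proved below.

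For the main claim, strong positivity, I would use the criterion stated in the excerpt: in the finite case, $D$ is strongly positive iff the matrix $\mathbf{D}$ whose entries are $D$ evaluated on pairs of atoms is positive semidefinite. Ordering the atoms of $\Omega$ lexicographically, the matrix $\mathbf{D}$ is exactly the Kronecker product $\mathbf{D}_1 \otimes \mathbf{D}_2$ of the atomic matrices of $D_1$ and $D_2$. By hypothesis, each $\mathbf{D}_i$ is Hermitian positive semidefinite, and the Kronecker product of two Hermitian PSD matrices is Hermitian PSD (its eigenvalues are products of eigenvalues of the factors, all non-negative). Hence $\mathbf{D}$ is PSD and $D$ is strongly positive.

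The only subtlety, and really the main thing to check carefully, is that the bi-additive extension from atoms to all of $\EA$ is consistent and that the strong positivity criterion for arbitrary finite families $\alpha_1,\dots,\alpha_N \in \EA$ indeed reduces to positivity of the atomic matrix. The latter follows because each $\alpha_k$ is a disjoint union of atoms, so the matrix $D(\alpha_i,\alpha_j)$ equals $C^\dagger \mathbf{D}\, C$ for the incidence matrix $C$ of atoms in events, which is PSD whenever $\mathbf{D}$ is. Once this reduction is made, the proof is essentially one line about Kronecker products, so no substantive obstacle remains.
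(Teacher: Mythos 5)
Your proof is correct, and it follows the same basic strategy as the paper -- reduce strong positivity of the composite to positive semidefiniteness of its matrix on atoms -- but it finishes with a different key linear-algebra fact. The paper's proof is a two-line appeal to the Schur product theorem: it describes the atomic matrix of the product decoherence functional as the \emph{Hadamard} (entrywise) product of the two subsystem decoherence functionals (implicitly lifted to the joint atom set, i.e.\ $D_1\otimes J$ and $J\otimes D_2$ with $J$ the all-ones matrix), and then uses the fact that the Hadamard product of positive matrices is positive. You instead identify the atomic matrix directly as the Kronecker product $\mathbf{D}_1\otimes\mathbf{D}_2$ and argue positivity from the eigenvalues being products of eigenvalues of the factors; of course these are the same matrix, since the Hadamard product of the lifted factors equals the Kronecker product. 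Your route is arguably more self-contained (no appeal to the Schur product theorem, and the Kronecker description is the more immediate statement of independence), and you also spell out two points the paper leaves implicit: the consistency of the bi-additive extension from atoms, and the reduction of the strong-positivity criterion for arbitrary finite families of events to the atomic matrix via the incidence matrix, $D(\alpha_i,\alpha_j)=C^{\dagger}\mathbf{D}C$. One small caution: you should speak of the atoms of the product (tensor product) event algebra as products of atoms of $\EA_1$ and $\EA_2$ rather than assuming $\EA_i=2^{\Omega_i}$, but this does not affect the argument.
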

\begin{proof}
The product decoherence functional on the atoms of the tensor product algebra is given by
the Hadamard product of the two decoherence functionals of the  subsystems. The 
Hadamard product of two positive matrices is positive.
\end{proof}

Strong positivity thus implies composability of subsystems and is related to complete 
positivity of quantum operations in the context of evolving systems in quantum measure theory
\cite{Martin:2004xi}. 
There is therefore a good reason to adopt strong positivity as a
physical condition on quantum measures, and if one 
were convinced by this reason, one might want 
to redefine quantum measures to include this condition. We will not do so in this 
paper as we want to discuss the consequences of adopting or not adopting the 
condition. 



\subsection{A quantum analogue of non-contextuality}

One response to the Bell Theorem, is to retreat to talk of the experimental settings and outcomes only, and to 
investigate operational principles that constrain the experimental probabilities in order to 
try to understand the bounds
on \textit{Q}.  Loosely speaking, this is to start from the most general set of all behaviours and ask why the quantum constraints are so strong \cite{Popescu:1994}.  We propose a complementary perspective, continuing to 
focus on the NC space $\Xi$ for a behaviour, but from a histories perspective within the 
framework of quantum measure theory.  The approach then becomes one of generalising the 
non-contextuality condition \textit{JPM}, asking why the quantum constraints are so weak.

The \textit{JPM} condition for a behaviour is that there exists a joint probability measure on the full sigma algebra $\EA$ generated by $\tcC$. Probability measures are Sorkin level one measures and since 
quantum measures are Sorkin level two measures this suggests the following
obvious quantum analogue of JPM: 
\begin{definition} (JQM)
\label{d:jqm}
A behaviour $(\Xi, \tcM, P)$ is in the set \textit{JQM} if there exists a joint quantum measure, $\mu_J$ on sigma 
algebra $2^\Xi$ such that 
\begin{equation}
\mu_J(X) =  P(X), \quad \forall \ X\in\tcC \quad \textnormal{(Experimental probabilities)}\,.
\end{equation}
\end{definition}
This condition says that the marginal quantum measure --  $\mu_{J}$ restricted to
 the subalgebra $\EA_M$ --  for each measurement 
$M$, coincides with the behaviour's probability measure $P(\cdot| M)$ on $\EA_M$.
\textit{JQM} is a quantum non-contextuality condition.

We can re-express the JQM condition in terms of decoherence functionals:
\begin{lemma}
\label{d:jqdf}
A behaviour $(\Xi, \tcM, P)$ is in the set \textit{JQM} if and only if there exists a joint decoherence functional, $D_J$ over $2^\Xi$ such that 
\begin{itemize}
\item $D_J(X, X) =  P(X), \quad \forall \ X\in\tcC$ (Experimental probabilities);
\item$ D_J(X, Y) = 0, \quad \forall \ X, Y \in \EA_M, \ X\cap Y = \emptyset, \quad \forall \ M\in \tcM$ (Decoherence of 
alternative experimental outcomes).
\end{itemize}
\end{lemma}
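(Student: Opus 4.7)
The plan is to prove both directions of the equivalence: the ``if'' direction (existence of $D_J$ implies JQM) is essentially immediate, while the ``only if'' direction requires a careful choice among the non-unique decoherence functionals representing $\mu_J$.

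For the ``if'' direction, suppose $D_J$ satisfies the two bullets. Setting $\mu_J(X) := D_J(X, X)$ produces a quantum measure on $2^\Xi$ by the equivalence between decoherence functionals and quantum measures recalled above. The first bullet gives $\mu_J(X) = P(X)$ on $\tcC$. Moreover, for any disjoint $X, Y$ in some $\EA_M$, bi-additivity of $D_J$ combined with the second bullet yields $\mu_J(X \cup Y) = \mu_J(X) + \mu_J(Y)$, confirming that $\mu_J$ restricted to $\EA_M$ is Kolmogorov-additive and agrees with $P(\cdot|M)$. Hence the behaviour is in JQM.

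For the ``only if'' direction, suppose $\mu_J$ is a joint quantum measure producing the experimental probabilities. By the same equivalence, \textit{some} decoherence functional $D$ representing $\mu_J$ exists, but this alone is not enough: Kolmogorov additivity of $\mu_J$ restricted to $\EA_M$ only forces $\mathrm{Re}\,D(X,Y) = 0$ for disjoint outcomes, with no control over imaginary parts. The idea is to exploit the non-uniqueness of $D$ to select a purely real representative. Since $\Xi$ is finite, I would define a real symmetric matrix $M_{\gamma \gamma'}$ by $M_{\gamma\gamma} := \mu_J(\{\gamma\})$ and $M_{\gamma\gamma'} := \tfrac{1}{2}\bigl(\mu_J(\{\gamma,\gamma'\}) - \mu_J(\{\gamma\}) - \mu_J(\{\gamma'\})\bigr)$ for $\gamma \neq \gamma'$, and extend bi-additively via $D_J(A,B) := \sum_{\gamma \in A,\, \gamma' \in B} M_{\gamma\gamma'}$.

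The main technical step I anticipate is verifying $D_J(S,S) = \mu_J(S)$ for all $S \subseteq \Xi$, not only for the small cases $|S| \leq 3$ where it is a direct consequence of the definitions and the quantum sum rule. This should follow by induction on $|S|$, iterating the quantum sum rule to express $\mu_J(S)$ as a fixed linear combination of the values $\mu_J(\{\gamma\})$ and $\mu_J(\{\gamma, \gamma'\})$, which matches $\sum_{\gamma, \gamma' \in S} M_{\gamma\gamma'}$ by construction. With this in hand, Hermiticity, bi-additivity, normalisation, and positivity of $D_J$ are immediate (the last from $\mu_J \geq 0$). Since $D_J$ is real, the vanishing of the real part of $D_J(X,Y)$ for disjoint $X, Y \in \EA_M$ (established as in the ``if'' direction) upgrades to $D_J(X,Y) = 0$, yielding the second bullet.
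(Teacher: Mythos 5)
Your proof is correct and takes essentially the same route as the paper: both directions hinge on representing $\mu_J$ by a \emph{real} decoherence functional, so that the vanishing of $\mathrm{Re}\,D_J(X,Y)$ for disjoint $X,Y\in\EA_M$ (forced by additivity of $P$ on each $\EA_M$) upgrades to full decoherence, with the converse immediate via $\mu_J(X):=D_J(X,X)$. The only difference is that the paper cites Sorkin's result that every quantum measure admits a real decoherence functional, whereas you reconstruct that representative explicitly from the singleton and pairwise-interference values and verify $D_J(S,S)=\mu_J(S)$ by iterating the quantum sum rule --- a correct, self-contained rendering of the same cited fact.
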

\begin{proof} Suppose a behaviour is in \textit{JQM} so there exists a 
quantum measure $\mu_J: 2^\Xi \rightarrow \mathbb{R}$ such that 
$\mu_J(X) = P(X)$ for all $X\in \tcC$. Then, there exists a real decoherence functional $D_J$
such that $D_J(X, X) =  \mu_J(X)$ for all  $X\in 2^\Xi$  \cite{Sorkin:1994dt}. Any decoherence functional satisfies
\begin{equation}
2 \,\textnormal{Re}\left[ D(X, Y) \right]=  D(X \cup Y, X \cup Y) - D(X, X) - D(Y, Y) \,,
\end{equation} 
for any two disjoint events, $X$ and $Y$.
For any measurement, $M$ and  $X, Y \in \EA_M, \ X\ne Y$ we also have
\begin{equation}
D_J(X \cup Y, X \cup Y) =  P(X\cup Y), \quad  D_J(X,X) = P(X) 
\quad \textnormal{and}\quad  D_J(Y,Y) = P(Y)\,.
\end{equation}
Since $D$ is real, the decoherence condition 
$D_J(X,Y) = 0$ follows.  Note that the two conditions together are equivalent to 
$ D_J(X, Y) = P(X\cap Y) , \quad \forall \ X, Y \in \EA_M,  \quad \forall \ M\in \tcM$.

The converse is immediate on defining $\mu_J(X) = D_J(X,X)$
for all $X\in 2^{\Xi}$.  
\end{proof}

An obvious strengthening of the \textit{JQM} condition is:
\begin{definition} (SPJQM)
\label{d:spjqm}
A behaviour $(\Xi, \tcM, P)$ is in the set \textit{SPJQM} if there exists a joint decoherence functional, $D_J$ over $2^\Xi$ such that 
\begin{itemize}
\item $D_J(X, Y) = P(X\cap Y) , \quad \forall \ X, Y \in \EA_M,  \quad \forall \ M\in \tcM$;
\item $D_J$ is strongly positive (SP).
\end{itemize}
\end{definition}

Definitions \ref{d:jqm} and \ref{d:spjqm} are generalisations of sets defined for the CHSH scenario
in \cite{Craig:2006ny,Barnett:2007}.

Recall that positive semidefiniteness of a Hermitian matrix is equivalent to the matrix being the inner product matrix, the \textit{Gram matrix}, of a set of vectors spanning a Hilbert space $\cH$.  Using this, we can present an equivalent definition of the \textit{SPJQM} set which makes manifest its connection to the NPA heirarchy of conditions:

\begin{lemma}
\label{l:spjqm}
A behaviour  $(\Xi, \tcM, P)$ is in \textit{SPJQM} if and only if there exists a Hilbert space $\cH$ spanned by a set of vectors indexed by elements of the sigma algebra $2^\Xi$, $\{| X\ra\}_{ X \in 2^\Xi }$,  such that their inner products satisfy the following conditions:
\begin{itemize}
\item For any $X \in 2^\Xi$,
\begin{equation}
\label{e:spjqm_sum}
|X \ra = \sum_{\gamma \in X} |\{\gamma\} \ra\,;
\end{equation}
\item For each measurement $M\in \tcM$,
\begin{equation}
\label{e:decoherence}
\la X | Y \ra=  \, P(X\cap Y) \quad \forall \, X,Y \in \EA_M\,.
\end{equation}
In particular, if two outcomes, $X$ and $Y$, are disjoint  then $\la X | Y \ra= 0$: 
disjoint outcomes decohere. 
\end{itemize}
\end{lemma}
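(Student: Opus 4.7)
The plan is to use the standard fact that a Hermitian matrix is positive semidefinite if and only if it is the Gram matrix of a spanning set of vectors in some Hilbert space, and to pass between the description of the decoherence functional via its values on the atoms of $2^\Xi$ and the description via inner products of vectors indexed by arbitrary elements of $2^\Xi$.

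For the forward direction, I would assume the behaviour is in \textit{SPJQM} and obtain vectors indexed by the atoms. By definition, $D_J$ is strongly positive, so on the finite algebra $2^\Xi$ the $|\Xi|\times |\Xi|$ matrix $G_{\gamma \gamma'} := D_J(\{\gamma\}, \{\gamma'\})$ indexed by the atoms of $2^\Xi$ is Hermitian positive semidefinite. Hence there is a Hilbert space $\cH$ and vectors $\{|\{\gamma\}\rangle\}_{\gamma\in\Xi}$ spanning $\cH$ with $\langle \{\gamma\} | \{\gamma'\}\rangle = G_{\gamma\gamma'}$. For arbitrary $X \in 2^\Xi$ define $|X\rangle := \sum_{\gamma \in X} |\{\gamma\}\rangle$, so that (\ref{e:spjqm_sum}) holds by construction. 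Using bi-additivity of $D_J$ and the fact that every element of $2^\Xi$ is a disjoint union of atoms, one gets $\langle X | Y \rangle = D_J(X,Y)$ for all $X,Y \in 2^\Xi$. Specialising to $X,Y \in \EA_M$ for a measurement $M$ and using the \textit{SPJQM} hypothesis $D_J(X,Y)=P(X\cap Y)$ gives (\ref{e:decoherence}).

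For the converse, given the vectors, I would define $D_J(X,Y) := \langle X | Y \rangle$ for all $X,Y \in 2^\Xi$ and check the four decoherence functional axioms together with strong positivity and the experimental probability condition. Hermiticity is immediate from the Hermiticity of the inner product. Bi-additivity reduces, via (\ref{e:spjqm_sum}), to the fact that for disjoint $Y, Z$ we have $|Y\cup Z\rangle = |Y\rangle + |Z\rangle$. Positivity $D_J(X,X) = \|\,|X\rangle\,\|^2 \ge 0$ is automatic. Normalisation uses that $\Xi \in \EA_M$ for every $M$, so (\ref{e:decoherence}) gives $D_J(\Xi,\Xi)=\langle\Xi|\Xi\rangle=P(\Xi)=1$. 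Strong positivity is the statement that, for any finite family $\{X_i\}\subset 2^\Xi$, the matrix $D_J(X_i,X_j)=\langle X_i|X_j\rangle$ is a Gram matrix and hence positive semidefinite. Finally, the first bullet of Definition \ref{d:spjqm} is exactly the hypothesis (\ref{e:decoherence}).

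There is no real obstacle here; the only point requiring a little care is checking that the decoherence functional one writes down in the converse direction is genuinely the ``same'' object as a quantum measure, i.e.\ that bi-additivity extends from disjoint unions of atoms to all disjoint unions in $2^\Xi$. This is handled uniformly by writing every event as a disjoint union of atoms and invoking (\ref{e:spjqm_sum}). The lemma is thus essentially a repackaging of the Gram matrix characterisation of positive semidefinite matrices, adapted to the full atomic algebra $2^\Xi$ rather than to the coarser collection of outcomes in $\tcC$ that appears in the definitions of $Q^1$ and $\QAB$.
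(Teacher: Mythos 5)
Your proof is correct and follows exactly the route the paper intends: the paper states the Gram-matrix characterisation of positive semidefiniteness immediately before the lemma and leaves the details implicit, and your argument (strong positivity on the atoms $\Leftrightarrow$ existence of spanning vectors, extension by additivity over atoms, and the routine verification of the decoherence-functional axioms in the converse) is precisely that intended argument spelled out. No gaps; the only point worth noting is the one you already flag, namely that bi-additivity over arbitrary disjoint events reduces to the atomic decomposition via (\ref{e:spjqm_sum}).
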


Although it is less simple to state, \textit{SPJQM} is physically more interesting than \textit{JQM}.  A theory would 
be problematic if composition of independent behaviours allowed by the 
theory yielded a behaviour not 
allowed by the theory (see \textit{e.g.}~\cite{Fritz:2013, Fritz:2012, Cabello:2012} for applications of this).  The \textit{JQM} condition is not composable: There exist behaviours for the CHSH scenario that are in \textit{JQM} but such that the composition of two independent copies is not \cite{Henson:2013}.  The \textit{SPJQM} condition on the other hand respects composition due to lemma \ref{l:compose}.  It is an open question whether the \textit{SPJQM} condition follows from the \textit{JQM} condition plus some simple principle expressing closure under composition of independent systems.
\section{Comparing the conditions}

\begin{lemma} Every behaviour for the CHSH (Bell (2,2,2)) scenario is in \textit{JQM}. 
\end{lemma}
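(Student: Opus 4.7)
The plan is to exploit the convexity of \textit{JQM}. First, I would verify that \textit{JQM} is convex: if $\mu_1$ and $\mu_2$ are joint quantum measures on $2^\Xi$ witnessing that behaviours $P_1, P_2$ lie in \textit{JQM}, then for $\lambda\in[0,1]$ the map $\mu := \lambda\mu_1 + (1-\lambda)\mu_2$ is again a quantum measure, since positivity, normalisation and the quantum sum rule (\ref{qsr}) are all linear/convex conditions on $\mu$ and are preserved under convex combinations; moreover $\mu(X) = \lambda P_1(X) + (1-\lambda) P_2(X)$ for every $X\in\tcC$, so $\mu$ witnesses $\lambda P_1 + (1-\lambda) P_2 \in$ \textit{JQM}.

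Next, I would recall that every behaviour for the CHSH $(2,2,2)$ scenario is no-signalling, by the consistency requirement built into the definition of a behaviour (as noted at the end of the Bell-scenarios paragraph). The no-signalling polytope for CHSH is a well-studied object whose extreme points, by the theorem of Barrett--Linden--Massar--Pironio--Popescu--Roberts, are precisely the $16$ local deterministic behaviours together with the $8$ PR-box behaviours related by relabellings of inputs and outputs. By convexity of \textit{JQM}, the lemma therefore reduces to checking that each of these $24$ extreme points lies in \textit{JQM}.

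Local deterministic behaviours are handled trivially: given functions $f_A, f_B : \{1,2\}\to\{0,1\}$, the Dirac probability measure on $\Xi=\{0,1\}^4$ concentrated at $(f_A(1), f_A(2), f_B(1), f_B(2))$ is a joint probability measure reproducing the behaviour; any classical probability measure is automatically a quantum measure (the Kolmogorov sum rule implies the quantum sum rule (\ref{qsr})), so the behaviour is in \textit{JPM}, hence in \textit{JQM}. For each of the $8$ PR-box behaviours, an explicit joint quantum measure was constructed in \cite{Craig:2006ny, Barnett:2007} (as already noted in the introduction), placing them in \textit{JQM}. Combining these facts with the convexity established above yields the lemma.

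The only genuinely substantive step is the PR-box construction, which is simply imported from \cite{Craig:2006ny, Barnett:2007}; the rest is convex geometry on the no-signalling polytope. A more direct but considerably more computational alternative would be to write down an explicit $16\times 16$ Hermitian decoherence functional on the atoms of $2^{\{0,1\}^4}$ using a symmetric ansatz and to verify the marginal and positivity conditions for a generic no-signalling parameterisation, but the convex-combination route bypasses this calculation entirely.
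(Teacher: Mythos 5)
Your argument is correct, but it is not the route the paper takes: the paper offers no in-text proof at all, simply deferring to \cite{Craig:2006ny,Barnett:2007}, where the joint quantum measure is exhibited directly for an arbitrary no-signalling CHSH behaviour by showing that the linear constraints (marginals plus the quantum sum rule (\ref{qsr})) always admit a solution. You instead reduce the lemma to its extreme cases: convexity of \textit{JQM} (which you verify correctly -- positivity, normalisation and the sum rule are all preserved under convex combination), the Barrett et al.\ classification of the vertices of the CHSH no-signalling polytope, the trivial inclusion \textit{JPM} $\subseteq$ \textit{JQM} for the $16$ deterministic vertices, and the PR-box measure imported from the same references. What your route buys is conceptual economy -- only one genuinely non-classical construction is needed, the rest being convex geometry; what it costs is that it leans on the vertex enumeration of the no-signalling polytope and is still not self-contained, since the PR-box measure comes from exactly the sources the paper cites for the whole lemma. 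One small point you gloss over: the references construct the measure for one PR box, and the other seven follow by relabellings of inputs and outputs; you should note that such relabellings induce automorphisms of $2^\Xi$ mapping $\tcC$ and the measurement subalgebras to themselves, so \textit{JQM} is closed under them. With that remark added, your proof is complete.
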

For more general scenarios this is not the case, for example the composition of two independent PR boxes 
is not in \textit{JQM} \cite{Henson:2013}. 
\begin{corollary} \textit{Q} is a proper subset of \textit{JQM}.
\end{corollary}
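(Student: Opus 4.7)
The corollary decomposes into two claims: (a) $\textit{Q} \subseteq \textit{JQM}$, and (b) the inclusion is strict. Claim (b) follows immediately from the preceding lemma, which places every CHSH behaviour in \textit{JQM}: in particular the Popescu--Rohrlich box lies in \textit{JQM}, yet it is not in \textit{Q} since it violates the Tsirelson bound. Thus strictness is automatic once (a) is established.

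For (a), my plan is to convert an ordinary quantum model directly into a joint decoherence functional on $2^\Xi$ by means of class operators. Given a model $(\cH, |\psi\ra, E)$ satisfying Definition \ref{d:q}, fix an arbitrary total ordering of the basic measurements $M_1, \ldots, M_p$, and to each history $\gamma = (\gamma_1, \ldots, \gamma_p) \in \Xi$ associate the class operator $C_\gamma := E^{X_p} E^{X_{p-1}} \cdots E^{X_1}$, where $X_i$ is the fine-grained outcome of $M_i$ selected by $\gamma_i$. Extending linearly by $C_\alpha := \sum_{\gamma \in \alpha} C_\gamma$, the candidate decoherence functional is $D_J(\alpha, \beta) := \la \psi | C_\alpha^\dagger C_\beta | \psi \ra$. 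Hermiticity, bi-additivity, and positivity are transparent from this definition, and normalisation reduces to $C_\Xi = \mathbf{1}$, which follows by iterating the resolution $\sum_{X \in M_i} E^X = \mathbf{1}$ from (\ref{e:qdef_projsum}).

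The principal technical step, and the main obstacle, is verifying $D_J(X, X) = P(X)$ for every $X \in \tcC$. For a fine-grained outcome $X = X_{i_1} \cap \cdots \cap X_{i_n}$ of a joint measurement $M$ of $\{M_{i_1}, \ldots, M_{i_n}\}$ with $i_1 < \cdots < i_n$, the sum $C_X = \sum_{\gamma \in X} C_\gamma$ factorises over positions because each $\gamma_j$ occurs in exactly one factor of $C_\gamma$; the indices $j \notin \{i_1, \ldots, i_n\}$ are collapsed by $\sum_{Y \in M_j} E^Y = \mathbf{1}$, yielding $C_X = E^{X_{i_n}} E^{X_{i_{n-1}}} \cdots E^{X_{i_1}}$. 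Then $D_J(X, X)$ is the expectation in $|\psi\ra$ of the sandwich $E^{X_{i_1}} \cdots E^{X_{i_n}} E^{X_{i_n}} \cdots E^{X_{i_1}}$, which reduces to $E^{X_{i_1}} \cdots E^{X_{i_n}}$ using commutativity of jointly measurable projectors (\ref{e:q_commute}) together with idempotence; this equals $P(X)$ by (\ref{e:oqb_probs}).

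Coarse-grained outcomes $X \in \EA_M$ are handled by bi-additivity once one shows $D_J(X_k, X_l) = 0$ for distinct fine-grained outcomes $X_k, X_l$ of the same $M$. Such a pair must differ at some index $i_j$ by specifying orthogonal outcomes of the common basic measurement $M_{i_j}$; bringing the two relevant projectors in $C_{X_k}^\dagger C_{X_l}$ adjacent to each other using (\ref{e:q_commute}) and invoking (\ref{e:q_orthogonality}) annihilates the whole product. Summing then gives $D_J(X, X) = \sum_k P(X_k) = P(X)$, completing (a). The delicate point throughout is that commutation is invoked only among projectors belonging to one joint measurement: basic measurements that are \emph{not} jointly performable pose no difficulty because their $\Xi$-indices are summed into identities before any reordering arises.
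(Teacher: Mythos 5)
Your proof is correct and follows essentially the same route as the paper: strictness is obtained exactly as intended from the preceding lemma (every CHSH behaviour, in particular the PR box, is in \textit{JQM}, while the PR box violates the Tsirelson bound and so is not in \textit{Q}), and your class-operator construction $D_J(\alpha,\beta)=\la \psi | C_\alpha^\dagger C_\beta |\psi\ra$ is precisely the Gram-matrix construction the paper uses (vectors $E^{X_1}\cdots E^{X_p}|\psi\ra$ summed over histories) in its lemma $Q\subseteq SPJQM$, from which $Q\subseteq JQM$ follows. You merely verify the marginal and decoherence conditions explicitly (and in fact your $D_J$ is strongly positive, so your argument reproduces that stronger lemma in more detail).
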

\begin{lemma}
\label{l:tlm_bound}
If a behaviour for the CHSH scenario is in \textit{SPJQM}, then the experimental probabilities 
satisfy the Tsirelson-Landau-Masanes bound \cite{Tsirelson:1987,Landau:1988,Masanes:2003}.
\end{lemma}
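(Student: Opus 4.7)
The natural plan is to show that \textit{SPJQM} is contained in $Q^1$, and then to invoke the known equivalence, in the CHSH scenario, between $Q^1$ and the Tsirelson-Landau-Masanes inequalities.

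First I would show that any behaviour in \textit{SPJQM} is also in $Q^1$. By lemma \ref{l:spjqm}, an \textit{SPJQM} behaviour gives a Hilbert space $\cH$ spanned by vectors $\{|X\ra\}_{X\in 2^\Xi}$ satisfying $|X\ra = \sum_{\gamma\in X}|\{\gamma\}\ra$ and $\la X|Y\ra = P(X\cap Y)$ whenever $X,Y$ lie in $\EA_M$ for some $M\in\tcM$. To build a $Q^1$ witness it suffices to restrict this family to the subset $\cC\subset 2^\Xi$ of coarse-grained outcomes of basic measurements (redefining $\cH$, if desired, to be the span of just these vectors). Condition (i) of definition \ref{d:q1} follows immediately from the additivity property (\ref{e:spjqm_sum}): for disjoint $X,Y\in\EA_{M_i}$,
\begin{equation*}
|X\cup Y\ra \;=\; \sum_{\gamma\in X\cup Y}|\{\gamma\}\ra \;=\; \sum_{\gamma\in X}|\{\gamma\}\ra + \sum_{\gamma\in Y}|\{\gamma\}\ra \;=\; |X\ra + |Y\ra.
\end{equation*}
Condition (ii) is where one uses the joint-measurement hypothesis: if basic measurements $M_i$ and $M_j$ are jointly performable then there is some $M\in\tcM$ including both, so $\EA_{M_i}\cup\EA_{M_j}\subseteq\EA_M$, and (\ref{e:decoherence}) yields $\la X|Y\ra = P(X\cap Y)$ for every $X\in\EA_{M_i}$, $Y\in\EA_{M_j}$. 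Thus the restricted family meets every clause of definition \ref{d:q1}, proving $\textit{SPJQM}\subseteq Q^1$.

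Second, specialise to the CHSH scenario. In $(2,2,2)$ every pair of basic measurements across the two wings is jointly performable, while the two measurements within a single wing are not. It is a known result that for the CHSH scenario the set $Q^1$ is precisely the set of no-signalling behaviours whose correlators $E_{xy}$ satisfy the Tsirelson-Landau-Masanes inequalities \cite{Tsirelson:1987,Landau:1988,Masanes:2003,Navascues:2008}. Combining this with the inclusion above immediately gives the lemma.

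There is no genuine obstacle: the first step is essentially a restriction argument, and the second is a citation. The only point that needs a little care is verifying that the $Q^1$ condition in the form of definition \ref{d:q1} really does coincide, for CHSH, with the TLM bound — this is the content of the quoted theorems of Tsirelson, Landau, Masanes and NPA, so I would simply invoke them rather than reprove the equivalence.
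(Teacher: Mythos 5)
Your argument is correct, but it is not the route the paper takes for this lemma. The paper's own proof is a citation: the TLM bound is derived directly for the CHSH scenario in \cite{Craig:2006ny,Barnett:2007} by working with the strongly positive joint decoherence functional itself (positivity of the matrix of $D_J$ on the atoms of $2^\Xi$ yields the correlator inequality without any detour through the NPA machinery). What you do instead is prove the inclusion $\textit{SPJQM}\subseteq Q^1$ and then quote the NPA result that $Q^1$ coincides with (and in particular implies) the TLM criterion for the $(2,2,2)$ scenario. Your first step is in fact the content of the paper's later Theorem \ref{t:SPJQMinQ}, proved there by exactly your restriction argument (define each $Q^1$ vector to be the corresponding \textit{SPJQM} vector from Lemma \ref{l:spjqm}, note that $\EA_{M_i}\subseteq\EA_M$ whenever $M$ includes $M_i$, and drop the spanning condition by passing to the span), and the paper itself remarks immediately after that theorem that it subsumes Lemma \ref{l:tlm_bound} in precisely this way. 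So your proof is sound and checks the same details the paper checks, but packaged differently: the cited direct proof is self-contained at the level of CHSH correlators and predates the hierarchy comparison, while your route is more modular and general (the inclusion in $Q^1$ holds for arbitrary joint measurement scenarios, with the CHSH-specific content isolated in the quoted $Q^1\Rightarrow$ TLM equivalence of \cite{Navascues:2008}). One small caution: for the lemma you only need the implication $Q^1\Rightarrow$ TLM, so your stronger claim that $Q^1$ is ``precisely'' the TLM set, while true for the correlator characterisation in \cite{Navascues:2008}, is more than you need and should be cited in the implication form actually used.
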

The proofs of these two lemmas are in \cite{Craig:2006ny,Barnett:2007}.

\begin{lemma} $Q \subseteq SPJQM$
\end{lemma}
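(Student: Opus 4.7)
The plan is to construct a strongly positive joint decoherence functional $D_J$ directly from the ordinary quantum model via a histories/class-operator formula, and to verify it using the equivalent characterisation of \textit{SPJQM} given in lemma \ref{l:spjqm}. Starting from the data $(\cH, |\psi\ra, E)$ supplied by lemma \ref{l:q2}, I first fix an arbitrary ordering of the basic measurements $M_1, \dots, M_p$, and for each $\gamma = (a_1, \dots, a_p) \in \Xi$ I define the class operator $C_\gamma := E^{X_p(a_p)} E^{X_{p-1}(a_{p-1})} \cdots E^{X_1(a_1)}$, where $X_j(a_j)$ denotes the fine-grained outcome of $M_j$ labelled by $a_j$. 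I then set $|\{\gamma\}\ra := C_\gamma |\psi\ra$ and extend additively by $|X\ra := \sum_{\gamma \in X} |\{\gamma\}\ra$ for every $X \in 2^\Xi$, so that the sum rule (\ref{e:spjqm_sum}) of lemma \ref{l:spjqm} holds by construction.

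The heart of the argument is to show that for any fine-grained outcome $X = X_{i_1}(\alpha_{i_1}) \cap \dots \cap X_{i_n}(\alpha_{i_n})$ of a joint measurement $M = \{M_{i_1}, \dots, M_{i_n}\}$, writing $I := \{i_1, \dots, i_n\}$, one has $|X\ra = E(X)|\psi\ra$. In the defining sum the $a_{i_k}$ are pinned to $\alpha_{i_k}$ while the $a_j$ for $j \notin I$ are summed freely; since each free $a_j$ appears only in the single factor $E^{X_j(a_j)}$ of $C_\gamma$, linearity of operator multiplication lets me pull the sum inside the product and replace that factor by $\sum_{a_j} E^{X_j(a_j)} = \mathbf{1}$. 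This summing-out step is the point I expect to need the most care, because it must go through even when $M_j$ fails to be jointly performable with some of the $M_{i_k}$ and $E^{X_j}$ therefore need not commute with its neighbours in the product; however, since the sum is pulled inside a single slot of the product this is a purely algebraic manoeuvre that does not require any commutation. After eliminating all free indices the product collapses to $E^{X_{i_n}(\alpha_{i_n})} \cdots E^{X_{i_1}(\alpha_{i_1})}|\psi\ra$, and because the $M_{i_k}$ are mutually jointly performable their projectors commute, so this coincides with the projector $E(X)$ of lemma \ref{l:q2}.

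With the formula $|X\ra = E(X)|\psi\ra$ in hand, condition (\ref{e:decoherence}) follows at once: for fine-grained $X, Y \in M$ one has $\la X | Y \ra = \la \psi | E(X) E(Y) |\psi\ra$, which equals $\la\psi|E(X)|\psi\ra = P(X)$ when $X = Y$ and vanishes when $X \ne Y$, since commutativity of the $E^{X_{i_k}}$ lets one rearrange $E(X) E(Y)$ into a product containing an orthogonal pair $E^{X_{i_k}(\alpha_{i_k})} E^{X_{i_k}(\beta_{i_k})}$ with $\alpha_{i_k} \ne \beta_{i_k}$. Extending to coarse-grained $X, Y \in \EA_M$ follows from additivity of $|X\ra$ in $X$ together with the disjointness of distinct fine-grained outcomes of $M$. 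Strong positivity is automatic, because on any finite collection of events the matrix of inner products $\la X | Y \ra$ is a Gram matrix of vectors in $\cH$ and therefore positive semidefinite.
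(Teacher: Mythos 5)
Your construction is the same as the paper's: define $|\{\gamma\}\ra$ as the ordered product of the basic-measurement projectors determined by $\gamma$ applied to $|\psi\ra$, extend additively to all of $2^\Xi$, and invoke lemma \ref{l:spjqm}. The paper simply asserts that the resulting vectors satisfy the conditions of that lemma, whereas you spell out the verification (summing out the unconstrained basic measurements via $\sum_X E^X=\mathbf{1}$ and using commutativity only for the jointly performable ones), which is correct and matches the technique the paper itself uses in the proof of theorem \ref{t:tQ^1inSPJQM}.
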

\begin{proof}
Suppose a behaviour is in \textit{Q}. Then projection operators $\{E^X\}$ indexed 
by the fine grained outcomes of basic measurements and
satisfying the conditions of definition \ref{d:q} exist. 
We define a  vector $|\{\gamma\}\ra$ for each atom $\{\gamma\}$ of $2^{\Xi}$:
\begin{equation}
|\{\gamma\}\ra := E^{X_1}\dots E^{X_p} |\psi \ra
\end{equation}
where $\{\gamma \} = X_1\cap\dots X_p$ and $X_i \in M_i$. We define vectors for 
each $X\in \tcC$ by addition:
\begin{equation}
|X \ra := \sum_{\gamma \in X}  |\{\gamma\} \ra\, .
\end{equation}
These vectors satisfy the conditions of lemma \ref{l:spjqm}.
\end{proof}

On comparison of the \textit{SPJQM} condition with the NPA conditions, similarities immediately become apparent.  Both require the existence of vectors with an inner product matrix that has a certain relationship to the experimental probabilities and a number of orthogonality relations.  Much of the work has been done by formulating the conditions in this way.  Based on this observation, the following relations can be discovered.

\begin{theorem}
\label{t:SPJQMinQ}
SPJQM $ \subseteq Q^1 $.
\end{theorem}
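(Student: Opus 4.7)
The plan is to observe that the SPJQM structure already provides, by restriction, everything needed for $Q^1$. Given a behaviour in SPJQM, lemma \ref{l:spjqm} supplies a Hilbert space $\cH$ together with vectors $\{|X\rangle\}_{X \in 2^\Xi}$ spanning $\cH$ and enjoying two properties: atomic additivity $|X\rangle = \sum_{\gamma \in X}|\{\gamma\}\rangle$, and the inner product relation $\langle X | Y \rangle = P(X \cap Y)$ whenever $X, Y$ both lie in some measurement algebra $\EA_M$ for $M \in \tcM$. Since $\cC \subseteq 2^\Xi$, the subfamily $\{|X\rangle\}_{X \in \cC}$ is already at hand, and I would replace $\cH$ by the span of this subfamily to meet the spanning requirement of definition \ref{d:q1} (as noted immediately after that definition, this reduction is harmless).

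I would then verify the two $Q^1$ conditions in turn. Condition (i), the additivity for disjoint outcomes of a single basic measurement, follows instantly from (\ref{e:spjqm_sum}): both $|X \cup Y\rangle$ and $|X\rangle + |Y\rangle$ decompose as the same sum of atomic vectors $|\{\gamma\}\rangle$ over $\gamma \in X \cup Y$.

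The only step requiring a small observation is condition (ii). If two basic measurements $M_i$ and $M_j$ are jointly performable, then by definition of a joint measurement scenario they are both included in some joint measurement $M \in \tcM$. Unpacking the construction of outcomes of $M$ from (\ref{e:joint_outcome}), every fine-grained outcome of $M_i$ is a disjoint union of fine-grained outcomes of $M$, so $\EA_{M_i} \subseteq \EA_M$, and similarly $\EA_{M_j} \subseteq \EA_M$. For $X \in \EA_{M_i}$ and $Y \in \EA_{M_j}$ this places both $X$ and $Y$ inside the single algebra $\EA_M$, so (\ref{e:decoherence}) applied with the measurement $M$ yields $\langle X | Y \rangle = P(X \cap Y)$, which is exactly (\ref{e:q1joint_measurement}).

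There is no serious obstacle: the key point is that SPJQM imposes inner product relations for every joint-measurement algebra at once, so the pairwise $Q^1$ relations between jointly performable basic measurements are simply a specialisation. The conceptual content of the theorem is rather that SPJQM --- which looks stronger because it demands an entire strongly positive joint decoherence functional over $2^\Xi$ --- nonetheless sits inside the NPA bound $Q^1$; the proof just exposes this by presenting the SPJQM vectors themselves as $Q^1$ vectors.
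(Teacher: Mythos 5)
Your proof is correct and follows essentially the same route as the paper: the $Q^1$ vectors are simply the SPJQM vectors restricted to $\cC$, with condition (i) read off from (\ref{e:spjqm_sum}) and condition (ii) from (\ref{e:decoherence}). The only addition is that you spell out the step the paper leaves implicit --- that jointly performable basic measurements have their outcome algebras contained in the algebra of a common joint measurement in $\tcM$ --- which is a correct and worthwhile clarification.
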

\begin{proof} 
Given a behaviour in \textit{SPJQM}, there exists 
a set of vectors obeying the conditions (\ref{e:spjqm_sum}-\ref{e:decoherence}).
To show that there exists a set of vectors with the properties (\ref{e:q1sum}) and (\ref{e:q1joint_measurement}) from definition \ref{d:q1} of $Q^1$  is an easy task:  the \textit{SPJQM} vectors range over all subsets of $\Xi$, while the $Q^1$ vectors range over a subset of these, so we can define each $Q^1$ vector as equal to the corresponding \textit{SPJQM} vector.  This immediately gives (\ref{e:q1sum}) from (\ref{e:spjqm_sum}) and (\ref{e:q1joint_measurement}) from (\ref{e:decoherence}). 
\end{proof}

\paragraph{} This theorem subsumes the earlier result given in lemma \ref{l:tlm_bound}: the Tsirelson-Landau-Masanes bound is implied by $Q^1$ \cite{Navascues:2008}.  It should be noted that, although
 the \textit{SPJQM} vectors are indexed by outcomes of all measurements in $\tcM$ and 
 not just the basic ones, one cannot use the same reasoning as above to prove that \textit{SPJQM} $ \subseteq \QAB $.  This is because the relations (\ref{e:tq1orthogonality})
 in the definition of $\QAB$  are \textit{not} implied by the conditions in the \textit{SPJQM} definition.  The significance of these extra orthogonality relations will be expanded on later.

\begin{theorem}
\label{t:tQ^1inSPJQM}
$\QAB \subseteq$ SPJQM \footnotemark.

\footnotetext{Miguel Navascues found this result for bipartite non-locality scenarios (private communication) using a new reformulation of $\QAB$ \cite{Acin:2013}.  The proof presented here for all joint measurement scenarios is based on his idea.}
\end{theorem}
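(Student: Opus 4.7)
The plan is to produce, from a $\QAB$ model, a Hilbert-space representation satisfying the SPJQM conditions of Lemma \ref{l:spjqm}. Let $\cH$ be the $\QAB$ Hilbert space with spanning vectors $\{|X\rangle\}_{X \in \tcC}$. I will construct, on a possibly enlarged space $\cH' \supseteq \cH$, a unit vector $|\psi\rangle$ and, for each basic measurement $M_i$, a complete family of mutually orthogonal projections $\{E^{a|i}\}_{a \in \Xi_i}$ (so that $E^{a|i}E^{a'|i}=\delta_{aa'}E^{a|i}$ and $\sum_a E^{a|i}=\mathbf{1}_{\cH'}$) satisfying the following property: for every doable outcome $X = X_{a_{i_1}|i_1}\cap\dots\cap X_{a_{i_n}|i_n}\in\tcC$,
\begin{equation*}
E^{a_{i_1}|i_1}\cdots E^{a_{i_n}|i_n}|\psi\rangle = |X\rangle.
\end{equation*}
Crucially, full commutativity of the $E^{a|i}$ for jointly performable measurements is \emph{not} required --- only that the above product reproduces $|X\rangle$ on $|\psi\rangle$. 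This weaker ``on-state'' compatibility is the content of the reformulation of $\QAB$ referenced in the footnote, and is what distinguishes $\QAB$ from $Q$.

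Given such a dilation, fix an arbitrary total order on the $p$ basic measurements and define, for each atom $\gamma = (a_1,\dots,a_p) \in \Xi$,
\begin{equation*}
|\{\gamma\}\rangle := E^{a_1|1}E^{a_2|2}\cdots E^{a_p|p}|\psi\rangle \in \cH'.
\end{equation*}
For any doable fine-grained outcome $X = \bigcap_{i \in I} X_{a_i|i}$ with $I \subseteq \{1,\dots,p\}$, summing $\sum_{\gamma \in X}|\{\gamma\}\rangle$ over the unconstrained indices $a_j$ (for $j \notin I$) and iteratively applying $\sum_{a_j}E^{a_j|j}=\mathbf{1}_{\cH'}$ collapses the product to $\prod_{i \in I} E^{a_i|i}|\psi\rangle$, which equals $|X\rangle$ by the defining property of the dilation. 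This establishes the sum condition (\ref{e:spjqm_sum}) for each $X \in \tcC$, and hence for every $X \in 2^\Xi$ by linear extension. The decoherence/probability condition (\ref{e:decoherence}) is then inherited directly from $\QAB$ condition (ii), so $\{|\{\gamma\}\rangle\}_{\gamma \in \Xi}$ realises an SPJQM model.

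The hard part is constructing the dilation. Two ingredients from $\QAB$ are essential. First, condition (ii) restricted to a basic measurement makes $\{|X_{a|i}\rangle\}_{a \in \Xi_i}$ a mutually orthogonal family for each $i$, which is the natural starting point for a Naimark-type extension producing $\{E^{a|i}\}_a$ on $\cH'$. Second, and more delicately, $\QAB$ condition (iii) enforces orthogonality of outcome vectors \emph{across} different basic measurements whenever their disjointness is witnessed by any one basic measurement; this cross-measurement consistency is what allows the separate dilations to be glued into a single operator model in which the products above act correctly on $|\psi\rangle$. The technical core of the proof is therefore to show that conditions (ii) and (iii) of Definition \ref{d:tq1} are together exactly the constraints that guarantee the existence of a joint dilation with the required on-state commutativity --- this is the reformulation established by Navascues and applied here to general joint measurement scenarios.
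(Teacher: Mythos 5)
Your overall architecture is the same as the paper's: build projector families $\{E^{a|i}\}$ for the basic measurements, define the atomic vectors $|\{\gamma\}\rangle$ as an ordered product of these projectors applied to the state, verify $\sum_{\gamma\in X}|\{\gamma\}\rangle=|X\rangle$ for $X\in\tcC$ by summing out complete families, and inherit (\ref{e:decoherence}) from condition (ii) of Definition \ref{d:tq1}. But there is a genuine gap: the entire content of the theorem sits in the step you defer, namely the existence of the complete, mutually orthogonal families $\{E^{a|i}\}$ with the ``on-state'' product property $E^{a_{i_1}|i_1}\cdots E^{a_{i_n}|i_n}|\psi\rangle=|X\rangle$. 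You do not construct these; you assert that conditions (ii) and (iii) are ``exactly the constraints'' guaranteeing a joint Naimark-type dilation and cite this as a known reformulation. That is the statement to be proved, not a lemma you may invoke, and your sketch of it is not adequate: starting a Naimark extension from the orthogonal family $\{|X_{a|i}\rangle\}_{a}$ of single-outcome vectors does not by itself give projectors that act correctly on the vectors $|X'\rangle$ of \emph{joint}-measurement outcomes, and ``gluing'' the separate dilations for different basic measurements into one operator model on a common space is precisely the unproven technical core.

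The missing construction can in fact be done directly on $\cH$, with no enlargement and no dilation machinery. For each basic measurement $M_i$, pick one outcome $X^*\in M_i$ arbitrarily, and for every other outcome $X\in M_i$ set $E^X:=\mathrm{proj}\bigl[\,\mathrm{span}\{|X'\rangle: X'\in\tcC,\ X'\subset X\}\,\bigr]$; condition (iii) makes these subspaces mutually orthogonal, so the $E^X$ are orthogonal projectors, and $E^{X^*}:=\mathbf{1}-\sum_{X\neq X^*}E^X$ completes the family. The on-state property then follows by direct computation: for a fine-grained outcome $X$ of a joint measurement $M$, write $|\Xi\rangle=\sum_{X'\in M}|X'\rangle$ using condition (i), and observe that each $E^{X_{i_k}}$ either fixes $|X'\rangle$ (when $X'\subset X_{i_k}$, since then $|X'\rangle$ lies in the defining span, or is orthogonal to all the other spans in the $X^*$ case) or annihilates it (again by condition (iii)), so only the term $|X\rangle$ survives. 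Note the crucial point your sketch misses: the projector must be onto the span of \emph{all} outcome vectors $|X'\rangle$ with $X'\subset X$, not onto the single vector $|X_{a|i}\rangle$, because it must fix the finer-grained outcome vectors appearing in the expansion of $|\Xi\rangle$. Without this (or an actual proof of the dilation claim you cite), your argument does not establish the theorem.
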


\begin{proof} Consider a behaviour in $\QAB$. By definition \ref{d:tq1} there exists of a set of vectors $\{| X \ra \}_{X\in\tcC}$ associated to the outcomes of coarse-grained measurements, obeying the properties (\ref{e:tq1sum}-\ref{e:tq1orthogonality}).  From these vectors we will construct a set of vectors $\{ |X \ra \}_{X \in 2^\Xi}$ indexed by all subsets of the NC space, that obeys (\ref{e:spjqm_sum}) and contains the original set $\{| X \ra \}_{X\in\tcC}$ as its restriction to $\tcC$.  With this established, Lemma \ref{l:spjqm} shows that the behaviour is in \textit{SPJQM}.

We first construct  vectors $|\{\gamma \}\ra$ for all singleton sets of histories, $\gamma \in \Xi$. 
 For each basic measurement $M_i \in \cM$, we select, arbitrarily, one  outcome $X^*\in M_i$. 
 For every other outcome $X\in M_i$, $X\ne X^*$ 
 we define the projection operator,
\begin{equation}
\label{e:projectors}
E^X= \mbox{proj} [\cH^{X} ]\, ,
\end{equation}
where proj$[\cdot]$ projects onto the given subspace, and 
\begin{equation}
\label{e:subspaces}
\cH^X = \mbox{span} \{|X' \ra : X' \in \tcC \mbox{ s.t. } X' \subset X \}\, .
\end{equation}
That is, $\cH^X$ is the subspace spanned by the vectors for all measurement outcomes that imply the outcome $X$ for basic measurement $M_i$.
The orthogonality relation  (\ref{e:tq1orthogonality}) implies that $\cH^X$ and $\cH^Y$ 
are orthogonal subspaces for $X\ne Y$ and so  we have $E^X E^{Y} = \delta_{X,Y} E^X$
for all outcomes $X, Y \in M_i$, such that $ X, Y \ne X^*$.
  We then define $E^{X^*}$, for the remaining outcome $X^*$ of $M_i$, 
  by 
\begin{equation}
\label{e:proj_sum}
\sum_{X \in M_i} E^X = \mathbf{1},
\end{equation}
where $\mathbf{1}$ is the identity operator on the Hilbert space $\cH$.  Since the 
other $E^X$'s are 
orthogonal projectors, so is $E^{X^*}$. 
$E^{X^*}$  projects onto the orthogonal complement, $\cH^{X^*}$, of the space spanned by all the subspaces $H^X$ with $X \in M_i$ such that $X\neq X^*$. We have, therefore, 
\begin{equation}
 E^X E^{Y} = \delta_{X Y} E^X, \ \ \forall X\in M_i\,. 
\end{equation}
Thus, the set of operators  $\{E^X\}_{X \in M_i}$, indexed by the 
outcomes of $M_i$, forms a projective measurement. Note this does not prove that there is an ordinary quantum model, because these operators may not satisfy (\ref{e:q_commute}) .

Using these projectors, we define the vector for an element $\gamma \in \Xi$  as 
\begin{equation}\label{e:gammavector}
|\{\gamma \}\ra := E^{X_1} E^{X_2} ... E^{X_p} |\Xi \ra\, ,
\end{equation}
where $X_i$ is the outcome for $M_i$ such that $\{\gamma\} = X_1 \cap X_2\cap \dots 
X_p$.

We now show that the original $\QAB$ vectors are sums of the vectors $|\{\gamma\}\ra$.
Let $M\in \tcM$ be a joint measurement of the set of $m$ basic measurements $\{M_{i_1}, M_{i_2},\dots M_{i_m}\}$. Let $X \in M$ be a fine-grained outcome of $M$, $X = X_{i_1} \cap X_{i_2} \cap \dots X_{i_m}$
where $X_{i_k} \in M_{i_k}$. 
By using (\ref{e:proj_sum}) for all basic measurements \textit{not} in $M$, we obtain,
\begin{align}
\sum_{\gamma \in X} |\ \{ \gamma \} \ra  &= E^{X_{i_1}} E^{X_{i_2}} ... E^{X_{i_m}} |\Xi\ra \\
&=  E^{X_{i_1}} E^{X_{i_2}} ... E^{X_{i_m}} \sum_{X' \in M} |X'\ra.
\end{align}
where the second line is an application of (\ref{e:tq1sum}). Note that $X$ is one of the 
fine-grained outcomes of $M$ being summed over here.  Each $E^{X_{i_k}}$ either leaves 
$|X'\ra$ invariant or annihilates it depending on whether or not $X' \subset X_{i_k}$. Of all the outcomes of $M$
only $X$ itself is left invariant by all the projectors and so  
\begin{equation}
\sum_{\gamma \in X} |\gamma \ra  = | X \ra\, ,\ \ \forall X \in M\,.
\end{equation}
From (\ref{e:tq1sum}) we can see that this holds for all coarse-grained outcomes,
\begin{equation}
\sum_{\gamma \in X} |\gamma \ra  = | X \ra\, , \ \ \forall X\in \EA_M\, ,
\end{equation}
and, since we chose an arbitrary measurement $M$ to consider,  
this holds for all outcomes $X \in \tcC$. Therefore we have
 \begin{equation}
|X \ra = \sum_{\gamma \in X} | \{ \gamma \} \ra\,, \quad \forall X \in 2^\Xi \,.
\end{equation}
where this serves as the \textit{definition} of 
$|X\ra$ for each $X \in 2^\Xi \setminus \tcC $.

This set of vectors $\{ |X \ra \}_{X \in 2^\Xi}$ satisfies (\ref{e:spjqm_sum}) by construction.  Because it restricts to the original $\QAB$ vectors $\{ |X \ra \}_{X \in \tcC}$, condition (\ref{e:tq1joint_measurement}) in the definition of $\QAB$ ensures that it also obeys (\ref{e:decoherence}). Lemma \ref{l:spjqm} states that if there exists a set of vectors $\{ |X \ra \}_{X \in 2^\Xi}$ satisfying these two conditions, the behaviour is in \textit{SPJQM}.
\end{proof}

Note that the order in which the projectors are applied to the state $|\Xi\ra$
in (\ref{e:gammavector}) was chosen in the proof to be the reverse of the order of the 
labelling of the basic measurements, $1\dots p$, but any other order would have worked also.
Thus there is no unique set of vectors $\{|\{\gamma\}\ra \}$ satisfying the conditions.

\subsection{A complication: branching measurements}
\label{s:branching_measurements}

In the formalism above, we have not allowed measurement choices to depend on other measurement outcomes, though one can imagine setting up such a situation in a lab.  In the CHSH scenario, an example would be using the outcome 
of whatever measurement is done in the A wing to determine the measurement setting in the B wing, if the former is done in the causal past of the latter.  Such ``branching'' measurements play an important role in the alternative FLS formalism for contextuality \cite{Fritz:2013}, where they are necessary for the definition of Bell scenarios, and so a comparison with results in this formalism requires us to consider them. 

We will investigate the inclusion of these branching measurements and the altering of the non-contextuality condition to reflect the assumption that if measurements $M$ and $M'$ are jointly performable then they are jointly performable in any causal order -- either may causally precede the other, thus allowing branching, or they may be performed at spacelike separation -- without altering an underlying measure on the NC space.  This is to allow the classical communication between wings necessary for the branching choices. Note that this assumption is incompatible with an assumption that the basic measurements have 
fixed locations in spacetime and jointly performable basic measurements are spacelike
separated from each other. We argued above that such fixed spacetime locations for the basic measurements justifies our joint measurement scenario and behaviour framework. Including branching measurements means that the non-contextuality assumptions inherent in the 
joint measurement scenario and behaviour framework must be made without the argument from locality. 

We will show below that the altered non-contextuality condition
in the classical case is no stronger than 
JPM. However in the quantum case the altered non-contextuality condition 
makes a difference to our conclusions. 
To illustrate the reason for this, consider two outcomes in the CHSH scenario, $X=(a=+1,b=+1;x=0,y=0)$ and $Y=(a=-1,b=+1;x=0,y=1)$.  These are disjoint in the sense that $X \cap Y = \emptyset$
but there is no measurement partition in $\tcM$ that contains them both. Thus, 
a joint decoherence functional, $D_J$, for a behaviour for this scenario,
as defined in definition \ref{d:jqdf} or \ref{d:spjqm}, need not satisfy $D_J(X,Y)=0$: these outcomes need not decohere with respect to each other because they are not alternative outcomes of a possible 
measurement.  If, however, we add to 
the collection of possible measurements a branching measurement with both $X$ and $Y$ as alternative outcomes -- one in which the  outcome, $a$, of the measurement $x=0$ in the A wing
 is used to determine the $y$ setting in the B wing -- 
then the corresponding quantum non-contextuality condition requires that 
the joint decoherence functional restricts to a probability measure on the sigma algebra 
of outcomes for this branching 
 measurement. This implies $D_J(X,Y) =0$, an additional condition on the joint 
 quantum measure. All the extra conditions of this sort from branching 
 measurements will allow us to prove
 a stronger result. 

In general, given a joint measurement scenario $(\Xi, \tcM)$, the possible branching measurements 
form an extended set of measurement partitions. We will need to consider only branching 
measurements in which one decision is made; we call these single branching measurements. 
A single branching measurement consists of an initial (not necessarily basic) measurement $M^1\in \tcM$ 
and, depending on its outcome $X^1$, a second measurement $M^2(X^1)$ from the set of all those 
measurements in $\tcM$ that include $M^1$.  This defines a new measurement partition of $\Xi$, 
$M_b(M^1,M^2(X^1))$.  Explicitly, $M_b$ contains all $X^2 \in M^2(X^1)$ such that $X^2 \subset X^1$, for all $X^1 \in M^1$. We denote the set of all such single branching
measurements by $\tcMb$. Note that $\tcMb$ is uniquely determined given $\tcM$. 
Also $\tcM \subseteq \tcMb$ since  each measurement, $M\in \tcM$, is (trivially) a single
branching measurement where $M^1 = M$ and $M^2(X^1) = M$ for all $X^1$.
\footnotetext{One could also consider multiple branching measurements, the conceptually obvious 
(but notationally cumbersome) generalisation of single branching measurements.
However, this would have no effect of the argument below.  With such an altered definition of $\tcMb$ lemma \ref{l:branch} would still be true, and so the altered definition of $SPJDFb$ would still be equivalent to $\QAB$.  That is, once single-branching measurements have been added to set of possible measurements, adding multiple branching measurements does not alter the quantum non-contextuality condition.}

For each $M_b\in \tcMb$,  the sigma algebra $\EA_{M_b}$ is that generated by the fine-grained outcomes in the partition $M_b$. The set of all outcomes for all measurements in $\tcMb$ will be denoted $\tcCb :=  \bigcup_{M\in \tcMb} \,\EA_M$.

Only one consequence of these new definitions, expressed in the following lemma, will be salient to the discussion below.
\begin{lemma}
\label{l:branch}
If  $X, Y \in \tcC$ and there exists a basic measurement $M_i$ and outcomes $X', Y' \in M_i$ such that $X' \cap Y' = \emptyset$, $X \subset X'$ and $Y \subset Y'$ then there exists a single branching measurement $M^* \in \tcMb$ such that  $X, Y \in  \EA_{M^*}$.
\end{lemma}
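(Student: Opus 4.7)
The plan is to construct a single branching measurement $M^* \in \tcMb$ explicitly. Take $M^1 = M_i$ as the initial stage. Since $X, Y \in \tcC$, there exist measurements $M_X, M_Y \in \tcM$ with $X \in \EA_{M_X}$ and $Y \in \EA_{M_Y}$. I will specify the second stage by setting $M^2(X') = M_X$, $M^2(Y') = M_Y$, and $M^2(Z) = M_i$ for every other fine-grained outcome $Z \in M_i$ (the trivial extension). Then $M^* := M_b(M^1, M^2(\cdot))$ should be the required branching measurement.

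The key (and essentially only) obstacle is to verify that $M_X$ and $M_Y$ include $M_i$ in the semilattice order on $\tcM$, as required by the definition of a single branching measurement. To handle this, I would exploit the Cartesian product structure $\Xi = \Xi_1 \times \cdots \times \Xi_p$. Writing $M_X$ as the joint measurement of basic measurements $\{M_{j_1},\dots,M_{j_n}\}$, each fine-grained outcome of $M_X$ is a cylinder set which fixes the coordinates $j_1,\dots,j_n$ and leaves the remaining coordinates free. If $M_i$ were not among these basic measurements, then every such cylinder would contain elements taking every value in $\Xi_i$, so any nonempty union of such cylinders would contain points with multiple $i$-th coordinate values. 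But $X \subset X'$ and $X'$ singles out exactly one value of the $i$-th coordinate, so this forces $X = \emptyset$, in which case the lemma is trivial since $\emptyset$ lies in every sigma algebra. Otherwise $M_i$ must already appear among the basic measurements composing $M_X$, which is precisely the statement that $M_X$ includes $M_i$. The identical argument gives $M_i \leq M_Y$.

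Once this is established, the prescription above genuinely defines an element $M^* \in \tcMb$. It remains to check that $X, Y \in \EA_{M^*}$. Because $M_X$ includes $M_i$, each fine-grained outcome of $M_X$ refines some fine-grained outcome of $M_i$ and is therefore either contained in $X'$ or disjoint from $X'$. The fine-grained outcomes of $M_X$ whose union is $X$ all lie inside $X'$ (since $X \subset X'$) and, by construction of $M_b$, are precisely the fine-grained outcomes of $M^*$ sitting inside the $X^1 = X'$ branch. Hence $X$ is a union of fine-grained outcomes of $M^*$, i.e.\ $X \in \EA_{M^*}$, and the symmetric argument on the $Y'$ branch yields $Y \in \EA_{M^*}$, completing the proof.
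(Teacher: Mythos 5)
Your proof is correct and takes essentially the same route as the paper: the identical branching measurement $M^*$ with $M^1=M_i$, $M^2(X')=M_X$, $M^2(Y')=M_Y$ and the trivial choice $M_i$ on the other branches. The only difference is that you spell out, via the product structure of $\Xi$ (plus the $X=\emptyset$ edge case), why $M_X$ and $M_Y$ must include $M_i$ --- a step the paper's proof simply asserts.
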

\begin{proof}
If the premise holds, there exist two measurements $M_X, M_Y \in \tcM$ of which $X$ and $Y$ are, respectively, outcomes. Both $M_X$ and $M_Y$ must include basic measurement $M_i$. Consider the branching measurement $M^*$ in which $M^1 = M_i$ and 
$M^2(X') = M_X$, $M^2(Y') = M_Y$ and $M^2(Z) = M_i$ for
$Z\in M_i$,  $Z \ne X'$ and $Z\ne Y'$. Then $X, Y \in  \EA_{M^*}$.
\end{proof}

We have extended the joint measurement scenario $(\Xi,\tcM)$ to $(\Xi,\tcMb)$.
Given a behaviour  $(\Xi,\tcM,P)$ we also extend the probability function $P$ to cover the
larger set of outcomes $\tcCb$. Note here that each \textit{fine-grained} outcome for a branching measurement
is a fine-grained outcome for some joint measurement in $\tcM$, so the only elements in $\tcCb$ 
and not in $\tcC$ are coarse-grained outcomes of the new measurements.  This means that the probability function, $P$,  on $\tcC$ can be uniquely extended to a probability function on $\tcCb$, by applying the Kolmogorov sum rule.  We call the behaviour thus extended, $(\Xi, \tcMb, P)$,  the branching extension of 
$(\Xi, \tcM, P)$.

Does including these types of measurements  make any difference to the strength of the conditions discussed above?  That is, if in the definitions of our sets \textit{JPM}, \textit{Q}, \textit{JQM}, 
\textit{etc.}, we were to replace $\tcM$ with $\tcMb$ and $\tcC$ with $\tcCb$, would the sets remain the same?  For the set of classically non-contextual behaviours \textit{JPM} given by definition \ref{d:jpm}, it is not hard to see that nothing changes: the alternative definition would be the same except with equation (\ref{e:joint_measure}) replaced with
\begin{equation}
P(X) = P_J(X), \quad \forall X \in \tcCb\,, 
\end{equation}
whereas before $P_J$ is a probability measure over the sigma algebra $2^\Xi$.  Obviously this implies equation (\ref{e:joint_measure}) because $\tcC \subset\tcCb $. It is also implied by equation (\ref{e:joint_measure}), because as just discussed, for sets $X \in \tcCb$ which are not in $\tcC$, the value of $P(X)$ is determined by the probabilities of sets in $\tcC$ by the Kolmogorov rule, as is the value of $P_J(X)$.

It is a straightforward exercise to show that the same is true for the sets \textit{Q} and $Q^1$.  Here we spell out 
how this works for $\QAB$.  

\begin{lemma} 
\label{d:tq1_a}
A behaviour $(\Xi, \tcM, P)$, whose branching extension is
$(\Xi, \tcMb, P)$, is in $\QAB$ if and only if there exists a Hilbert space $\cH$ and a set of vectors $\{ |X \ra \}_{X \in \tcCb}$
indexed by all the coarse-grained outcomes of all branching measurements, that span $\cH$  and satisfy the following conditions:
\begin{itemize}
\item[(i)]  if $M_b\in \tcMb$ and $X,Y \in \EA_{M_b}$ are disjoint outcomes, $X\cap Y = \emptyset$, then
\begin{equation}
\label{e:tq1sum_a}
| X \cup Y \ra= |X\ra + |Y \ra\,;
\end{equation}
\item[(ii)]  for each branching measurement $M_b\in \tcMb$ and all outcomes $X, Y \in  \EA_{M_b}$ 
\begin{equation}
\label{e:tq1joint_measurement_a}
\la X | Y \ra= P(X \cap Y)\;.
\end{equation}
\end{itemize}
\end{lemma}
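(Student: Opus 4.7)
The plan is to prove both directions separately, with the easy ``if'' direction handled by lemma \ref{l:branch} and the harder ``only if'' direction built on top of the vector construction from Theorem \ref{t:tQ^1inSPJQM}.

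For the ``if'' direction, I would take vectors $\{|X\ra\}_{X \in \tcCb}$ satisfying (\ref{e:tq1sum_a}) and (\ref{e:tq1joint_measurement_a}) and restrict them to $\tcC \subseteq \tcCb$; conditions (i) and (ii) of definition \ref{d:tq1} are then immediate because $\tcM \subseteq \tcMb$. For condition (iii) of definition \ref{d:tq1}, lemma \ref{l:branch} converts the basic-measurement orthogonality hypothesis on $X, Y \in \tcC$ into a single branching measurement $M^* \in \tcMb$ with $X, Y \in \EA_{M^*}$; then (\ref{e:tq1joint_measurement_a}) applied to $M^*$, together with $X \cap Y \subset X' \cap Y' = \emptyset$, yields $\la X | Y \ra = P(\emptyset) = 0$.

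For the ``only if'' direction, I would carry over the construction from the proof of Theorem \ref{t:tQ^1inSPJQM}: build projectors $E^X$ for each basic-measurement outcome, set $|\{\gamma\}\ra := E^{X_1} \cdots E^{X_p} |\Xi\ra$ for each $\gamma \in \Xi$, and define $|X\ra := \sum_{\gamma \in X} |\{\gamma\}\ra$ for every $X \in 2^\Xi$. That proof already shows these vectors agree with the original $\QAB$ vectors on $\tcC$, so the restriction of this family to $\tcCb$ automatically spans $\cH$ and satisfies (\ref{e:tq1sum_a}) by construction. Condition (\ref{e:tq1joint_measurement_a}) then reduces, via the sum-rule expansion of arbitrary $X, Y \in \EA_{M_b}$ into fine-grained outcomes of $M_b$, to two facts: $\la \alpha | \alpha \ra = P(\alpha)$ for each fine-grained outcome $\alpha$ of $M_b$ (which lies in $\tcC$, so this follows from (\ref{e:tq1joint_measurement})), and mutual orthogonality $\la \alpha | \beta \ra = 0$ for distinct fine-grained outcomes $\alpha \neq \beta$ of $M_b$.

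The cross-branch orthogonality is the main obstacle. If $\alpha$ and $\beta$ come from the same branch $M^2(X^1) \in \tcM$ of $M_b = M_b(M^1, M^2(\cdot))$, they are distinct fine-grained outcomes of a common measurement in $\tcM$, so (\ref{e:tq1joint_measurement}) gives the orthogonality. Otherwise $\alpha \subset X^1$ and $\beta \subset Y^1$ for distinct $X^1, Y^1 \in M^1$; decomposing $M^1$ as a joint measurement of basic measurements $\{M_{j_1}, \dots, M_{j_s}\}$, the distinct fine-grained tuples defining $X^1$ and $Y^1$ must disagree on some coordinate $k$, so $X^1 \subset X_{j_k}$ and $Y^1 \subset Y_{j_k}$ for disjoint outcomes $X_{j_k}, Y_{j_k} \in M_{j_k}$ of the basic measurement $M_{j_k}$. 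This is exactly the hypothesis of condition (iii) in definition \ref{d:tq1}, so (\ref{e:tq1orthogonality}) delivers $\la \alpha | \beta \ra = 0$. This is precisely where the extra relation (\ref{e:tq1orthogonality}) built into $\QAB$, which is absent from $Q^1$, is indispensable; the analogous lemma for $Q^1$ would fail.
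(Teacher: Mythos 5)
Your proof is correct and follows essentially the same route as the paper's: the ``only if'' direction reuses the vector construction from Theorem \ref{t:tQ^1inSPJQM} and splits fine-grained outcomes of a branching measurement into the same-branch case (handled by (\ref{e:tq1joint_measurement})) and the cross-branch case (handled by the orthogonality condition (iii) of $\QAB$), while the ``if'' direction recovers condition (iii) via lemma \ref{l:branch} exactly as in the paper. Your explicit reduction of cross-branch disagreement to a disagreement on some basic-measurement coordinate simply spells out the justification the paper leaves implicit in the phrase ``their corresponding vectors are orthogonal.''
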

\begin{proof}
Suppose a behaviour is in $\QAB$. Construct vectors $\{ |X \ra\}_{X \in 2^\Xi} $ 
as in the proof of Theorem \ref{t:tQ^1inSPJQM}. These vectors 
satisfy (\ref{e:tq1sum_a}) when $X\cap Y = \emptyset$. 
If $X$ and $Y$ are fine-grained outcomes of $M_b$ then 
either they disagree on the outcome of the first measurement $M^1$ in the 
branching measurement in which case their corresponding vectors are 
orthogonal, or they agree on that first outcome in which case they are 
outcomes of the same measurement $M^2 \in \tcM$. Therefore
  (\ref{e:tq1joint_measurement_a}) holds for 
fine-grained outcomes of $M_b$. The vector for a coarse grained
outcome is the sum of the vectors for the fine-grained outcomes of
which it is the union and 
and so    (\ref{e:tq1joint_measurement_a})  also holds for coarse-grained outcomes of $M_b$.

Conversely, if vectors $\{ |X \ra \}_{X \in \tcCb}$ exist satisfying  (\ref{e:tq1sum_a}) and (\ref{e:tq1joint_measurement_a}) then those for $X\in \tcC$ satisfy conditions \textit{(i)} and 
\textit{(ii)} in definition \ref{d:tq1}.  It remains to show that these vectors 
satisfy  \textit{(iii)} in definition \ref{d:tq1}. Suppose $X, Y \in \tcC$ and there exists a basic 
measurement $M_i$ and outcomes $X', Y' \in M_i$ such that $X' \cap Y' = \emptyset$, 
$X\subset X'$ and $Y\subset Y'$. Then lemma \ref{l:branch} implies there is a branching
measurement $M^* \in \tcMb$ such that $X,Y \in \EA_{M^*}$. $X$ and $Y$ are disjoint and
so by (\ref{e:tq1joint_measurement_a}) we have
\begin{equation}\label{e:tq1orthogonality_a}
\la X | Y\ra = 0\,.
\end{equation}

\end{proof}

Thus, the conditions \textit{JPM}, \textit{Q}, $Q^1$ and $\QAB$, although originally expressed in terms of the 
joint measurements $\tcM$, also contain the information necessary to incorporate branching 
measurements in a non-contextual manner: once you have a \textit{JPM} -- or \textit{Q}, $Q^1$ or $\QAB$
-- model for the non-branching behaviour there is a unique way to extend it to a model of
the behaviour extended by adding all branching measurements. 

In contrast, however, the corresponding alteration of the set \textit{SPJQM} is not 
obviously the same.

\begin{definition}
\label{d:spjqmp} (SPJQM${}_b$)
A behaviour $(\Xi, \tcM, P)$, whose branching extension is
$(\Xi, \tcMb, P)$,
 is in the set \textit{SPJQM${}_b$} if there exists a joint decoherence functional, $D_J$ over $2^\Xi$ such that 
\begin{itemize}
\item[(i)] $D_J(X, Y) = P(X\cap Y) , \quad \forall \ X, Y \in \EA_{M_b},  \quad \forall \ M_b\in \tcMb$;
\item[(ii)] $D_J$ is strongly positive (SP).
\end{itemize}
\end{definition}
$SPJQM_b \subseteq SPJQM$, but as mentioned above there is reason to think that the two are not equivalent.  There is a modified version of lemma \ref{l:spjqm}.

\begin{lemma}
\label{l:spjqmp}
A behaviour  $(\Xi, \tcM, P)$, whose branching extension is
$(\Xi, \tcMb, P)$, is in \textit{SPJQM${}_b$} if and only if there exists a Hilbert space $\cH$ spanned by a set of vectors indexed by elements of the sigma algebra $2^\Xi$, $\{| X\ra\}_{X \in 2^\Xi}$,  such that their inner products satisfy the following conditions:
\begin{itemize}
\item[(i)] For any $X \in 2^\Xi$,
\begin{equation}
\label{e:spjqm_sum_a}
|X \ra = \sum_{\gamma \in X} |\{\gamma\} \ra\,;
\end{equation}
\item[(ii)] For each measurement $M_b\in \tcMb$,
\begin{equation}
\label{e:decoherence_a}
\la X | Y \ra=  \, P(X\cap Y) \quad \forall \, X,Y \in \EA_{M_b}\,.
\end{equation}
 In particular, if two outcomes $X$ and $Y$ of a measurement are disjoint then $\la X \, |\, Y \ra=0$: 
 disjoint outcomes decohere. 
\end{itemize}
\end{lemma}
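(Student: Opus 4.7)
The plan is to proceed exactly as in the proof of Lemma \ref{l:spjqm}, since the passage from $\tcM$ to $\tcMb$ only changes \emph{which} collection of outcome sub-algebras receives a ``decoherence/probability'' constraint on $D_J$; the structural features (Hermiticity, bi-additivity, strong positivity and the fact that $D_J$ lives on all of $2^\Xi$) are unchanged. I will prove the two implications separately, and in each direction the argument reduces to the Gram-matrix characterisation of positive semidefinite matrices.

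For the forward direction, assume the behaviour lies in \textit{SPJQM${}_b$} and let $D_J$ be a witnessing strongly positive joint decoherence functional on $2^\Xi$. Strong positivity applied to the atoms $\{\gamma\}$ of $2^\Xi$ says that the $|\Xi|\times|\Xi|$ matrix $D_J(\{\gamma\},\{\gamma'\})$ is positive semidefinite, and hence is the Gram matrix of some set of vectors $\{|\{\gamma\}\ra\}_{\gamma\in\Xi}$ spanning a Hilbert space $\cH$. I then \emph{define}
\begin{equation}
|X\ra := \sum_{\gamma\in X}|\{\gamma\}\ra,\qquad X\in 2^\Xi,
\end{equation}
so that (i) of the lemma holds tautologically. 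Bi-additivity of $D_J$ (applied repeatedly) together with $D_J(\{\gamma\},\{\gamma'\}) = \la\{\gamma\}\,|\,\{\gamma'\}\ra$ yields $\la X\,|\,Y\ra = D_J(X,Y)$ for all $X,Y\in 2^\Xi$. Restricting to any $M_b\in\tcMb$ and any $X,Y\in\EA_{M_b}$ gives $\la X\,|\,Y\ra = D_J(X,Y)=P(X\cap Y)$, which is (ii).

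For the converse, suppose the vectors $\{|X\ra\}_{X\in 2^\Xi}$ exist and satisfy (i) and (ii). Define $D_J(X,Y):=\la X\,|\,Y\ra$ on $2^\Xi\times 2^\Xi$. Hermiticity of $D_J$ is Hermiticity of the inner product; bi-additivity follows directly from (i) (when $Y\cap Z=\emptyset$ we have $|Y\cup Z\ra = |Y\ra+|Z\ra$, so $D_J(X,Y\cup Z) = D_J(X,Y)+D_J(X,Z)$, and symmetrically in the first slot); normalisation $D_J(\Xi,\Xi)=1$ follows from (ii) applied to any $M_b$ with $\Xi\in\EA_{M_b}$ and $P(\Xi)=1$; positivity $D_J(X,X)=\||X\ra\|^2\ge 0$ is automatic. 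Strong positivity is automatic as well, because any matrix of inner products from a fixed Hilbert space is positive semidefinite. Condition (i) of definition \ref{d:spjqmp} is exactly condition (ii) of this lemma.

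There is no substantive obstacle: the proof is a direct transcription of the proof of Lemma \ref{l:spjqm}, and the only thing worth remarking on is that the existence of the single Gram-matrix construction on the \emph{fixed} space $2^\Xi$ allows one to enlarge the collection of constrained sub-algebras from $\{\EA_M:M\in\tcM\}$ to $\{\EA_{M_b}:M_b\in\tcMb\}$ without altering the proof template. The content of the subsequent comparison with $\QAB$, namely that the extra branching constraints from Lemma \ref{l:branch} are precisely what is needed to match orthogonality condition (iii) in definition \ref{d:tq1}, is exactly why one bothers to state the lemma in this form.
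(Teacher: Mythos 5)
Your proof is correct and follows essentially the same route the paper intends (it leaves this lemma, like Lemma \ref{l:spjqm}, to the Gram-matrix equivalence between strong positivity of $D_J$ on the atoms of $2^\Xi$ and the existence of spanning vectors with the stated inner products, with bi-additivity supplying the sum rule in one direction and the definition $D_J(X,Y):=\la X|Y\ra$ supplying the decoherence-functional axioms in the other). No gaps; the passage from $\tcM$ to $\tcMb$ indeed changes nothing in the argument beyond which sub-algebras carry the constraint $\la X|Y\ra = P(X\cap Y)$.
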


\begin{theorem}
\label{t:SPJQMintQ}
\textit{SPJQM${}_b$}$ = \QAB $.
\end{theorem}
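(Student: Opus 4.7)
The plan is to prove both inclusions separately. Since $\QAB \subseteq SPJQM$ was already established in Theorem \ref{t:tQ^1inSPJQM}, and $SPJQM_b \subseteq SPJQM$ is obvious from the definitions, the substantive content is (a) strengthening Theorem \ref{t:tQ^1inSPJQM} to place $\QAB$ inside the smaller set $SPJQM_b$, and (b) showing that the extra orthogonality conditions coming from branching measurements are exactly what is needed to recover the $\QAB$ orthogonality condition (iii).

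For the easy direction, $SPJQM_b \subseteq \QAB$, I would start with the vectors $\{|X\ra\}_{X\in 2^\Xi}$ guaranteed by Lemma \ref{l:spjqmp} and simply restrict them to $\tcC$. Conditions (i) and (ii) of definition \ref{d:tq1} follow immediately from (\ref{e:spjqm_sum_a}) and (\ref{e:decoherence_a}) applied with $M_b = M \in \tcM \subseteq \tcMb$. For condition (iii), suppose $X,Y \in \tcC$ with $X \subset X'$, $Y \subset Y'$ for disjoint $X',Y' \in M_i$ for some basic $M_i$. Lemma \ref{l:branch} supplies a single branching measurement $M^* \in \tcMb$ with $X,Y \in \EA_{M^*}$, and since $X \cap Y \subset X' \cap Y' = \emptyset$, condition (\ref{e:decoherence_a}) applied to $M^*$ gives $\la X | Y \ra = P(\emptyset) = 0$. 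This is exactly the orthogonality required by (\ref{e:tq1orthogonality}).

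For the converse $\QAB \subseteq SPJQM_b$, I would reuse the construction from the proof of Theorem \ref{t:tQ^1inSPJQM}, which builds vectors $|\{\gamma\}\ra$ for each $\gamma \in \Xi$ using projectors associated to basic-measurement outcomes, and then defines $|X\ra := \sum_{\gamma \in X}|\{\gamma\}\ra$ for all $X \in 2^\Xi$. The proof of Theorem \ref{t:tQ^1inSPJQM} already shows that these extended vectors agree with the original $\QAB$ vectors on $\tcC$. The new work is to verify that (\ref{e:decoherence_a}) holds not just for $M \in \tcM$ but for all $M_b \in \tcMb$. The key observation is that every fine-grained outcome of a single branching measurement $M_b = M_b(M^1, M^2(X^1))$ is a fine-grained outcome of some $M^2(X^1) \in \tcM$, hence lies in $\tcC$, and every coarse-grained outcome in $\EA_{M_b}$ is a union of such. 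Combined with Lemma \ref{d:tq1_a}, which guarantees that the original $\QAB$ vectors (when extended by additivity to $\tcCb$) already satisfy $\la X | Y\ra = P(X \cap Y)$ for all outcomes of a branching measurement, and the agreement of our extension with the originals on $\tcC$, bilinearity lifts the inner-product relation to all of $\EA_{M_b}$. Strong positivity is automatic because the inner-product matrix of $\{|X\ra\}_{X\in 2^\Xi}$ is a Gram matrix.

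The main conceptual obstacle is making sure that the construction's dependence on basic measurements in the proof of Theorem \ref{t:tQ^1inSPJQM} does not cause it to miss the more stringent decoherence conditions imposed by branching measurements whose first stage $M^1$ is a non-basic joint measurement. This is resolved by invoking Lemma \ref{d:tq1_a}, which packages the extra inner-product structure needed into the $\QAB$ hypothesis itself; once one sees that the extended vectors agree with the Lemma \ref{d:tq1_a} vectors on all of $\tcCb$, the decoherence relations for branching measurements are inherited for free, and no additional argument involving the internal structure of a branching measurement is necessary.
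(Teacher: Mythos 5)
Your proof is correct and follows essentially the same route as the paper: the paper's own argument cites Lemmas \ref{l:spjqmp} and \ref{d:tq1_a} for \textit{SPJQM${}_b$}$\subseteq \QAB$ (your use of Lemma \ref{l:branch} to get condition (iii) simply re-derives the converse half of Lemma \ref{d:tq1_a}) and, for $\QAB \subseteq$ \textit{SPJQM${}_b$}, repeats the construction of Theorem \ref{t:tQ^1inSPJQM} and checks decoherence on branching-measurement algebras exactly as in the forward half of Lemma \ref{d:tq1_a}'s proof. The one small imprecision is that you should appeal to the \emph{proof} of Lemma \ref{d:tq1_a}, which constructs precisely your additive extension and verifies (\ref{e:tq1joint_measurement_a}) for it, rather than to its statement, which only asserts the existence of some vectors indexed by $\tcCb$ and does not by itself identify them with your extension.
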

\begin{proof}
Lemmas \ref{l:spjqmp} and  \ref{d:tq1_a} imply that \textit{SPJQM${}_b$}$\subseteq \QAB$
and $\QAB\subseteq SPJQM_b$ can be proved following the proof of theorem \ref{t:tQ^1inSPJQM}.
\end{proof}

\section{Discussion}

We have found close relations between the NPA hierarchy of sets of experimental behaviours, and sets of
behaviours of interest in quantum measure theory.  In particular a quantum non-contextuality 
condition, \textit{SPJQM}, was defined, subsuming the previous definition for the CHSH scenario \cite{Craig:2006ny}.  This condition was found to be intermediate in strength between $Q^1$ and $\QAB$. 
Thus, if $\QAB$ is strictly larger than \textit{Q}, as indicated by the computational evidence, then 
 non-local correlations beyond those achievable in ordinary quantum mechanics are achievable within strongly positive quantum measure theory. However, it is possible that additional 
 constraints may arise within this framework as it is developed. 

We also saw that an interesting modification of \textit{SPJQM}, which we called \textit{SPJQM${}_b$},
is equal to the NPA set $\QAB$. The necessary change is to require 
that for any joint measurement behaviour allowed by the physics, 
 its branching extension is also physical. Including branching measurements, where the outcome
 of one measurement determines which future measurement is performed, imposes 
 more conditions because the quantum measure must restrict to a probability measure, and therefore
 decohere, on
 a larger collection of measurement subalgebras. The set
 \textit{SPJQM${}_b$} is therefore ``closed under wiring'' meaning that any behaviour created by applying ``classical operations'' -- coarse-graining measurement choices and/or outputs, post-selection, composition or ``branching'' as described above -- to a collection of behaviours in the set
 is also in the set  \cite{Acin:2013}. 
   It is not yet
 known whether   \textit{SPJQM${}_b$} is strictly contained in \textit{SPJQM} or whether
 they are equal. We will assume the former is the case, for the sake of the following 
 discussion.

Returning to our main motivation, the use of quantum analogues of the Fine trio 
 of conditions on a behaviour to discover quantum Bell causality, 
we see now that we have a choice for the ``joint measure" 
condition in the trio: \textit{SPJQM}or  \textit{SPJQM${}_b$}.  In contrast, 
as we saw, in the original Fine trio there is no choice to be made 
because the non-contextuality conditions are equivalent 
when the joint measure is a probability measure: $JPM \equiv JPM_b$. 
For the purpose of identifying a quantum Bell causality condition,  
the relevant scenarios are Bell scenarios where the 
jointly performable measurements are spacelike separated
which therefore rules out the possibility of branching measurements and it 
would seem therefore that \textit{SPJQM} is the appropriate condition.\footnotemark
It would be prudent to bear \textit{SPJQM${}_b$} in mind, however. We 
hope that the quantum non-contextuality condition will be a guide to discovering quantum 
Bell causality because the latter should imply the former. It is possible that 
if we find a scientifically fruitful
condition of quantum Bell causality -- perhaps by another route --  it will turn out to
 imply the stronger condition and then 
we will have discovered that \textit{SPJQM${}_b$} was the appropriate 
quantum non-contextuality condition after all. 

\footnotetext{This is the main reason we used a formalism based on the ideas of LSW \cite{Liang:2011} rather than FLS \cite{Fritz:2013}.  In the latter's framework, the natural definition of the existence of a strongly positive joint quantum measure would have automatically given \textit{SPJQM${}_b$} for Bell scenarios, because branching measurements are used by FLS in their ``Randall-Foulis product''.  
There would have been no way to define \textit{SPJQM} in the FLS formalism.}

Many other questions remain. The exact strength of the \textit{SPJQM} condition has not been determined: it is not known, for example, whether or not
it is equivalent to $Q^1$.  An exploration of  the relation of the weaker, but simpler \textit{JQM} condition to other principles such as local orthogonality and a proof that if \textit{JQM} is strengthened to $JQM_b$ 
by including branching measurements it implies local orthogonality will 
appear elsewhere \cite{Henson:2013}. The  assumption of strong positivity of the 
quantum measure was motivated by arguing that it guarantees composability 
of quantum measures on two independent systems.  If it can be proved that
strong positivity is necessary for composability then the assumption of strong positivity would have 
very strong physical motivation.  The connection to local orthogonality/consistent exclusivity might 
lead to new insights here. 
What of the higher levels of the NPA hierarchy?  As mentioned above, these conditions require the existence of a matrix with the same properties as the inner product matrix of vectors corresponding to strings of projection operators acting on the initial state in an ordinary quantum model. Strings of projection 
operators are known as  \textit{class operators} in the literature on decoherence functionals 
and are basic entities  from which decoherence functionals are constructed for sequences of
events ordered in time. For Bell scenarios therefore, it is possible that there is a connection 
between the higher level NPA conditions and decoherence functional based conditions arising from consideration of sequences of possible measurements.

\section*{Acknowledgements}
The authors thank Miguel Navascues for his comments on these ideas, especially those which led to theorem \ref{t:tQ^1inSPJQM}. Thanks are also due to Rafael Sorkin for detailed discussions of the work, and to Toni Acin, Adan Cabello, Belen Sainz and Tobias Fritz for helpful discussion on contextuality scenarios.  Thanks also to Alex Wilce and Rob Spekkens for comments on earlier versions of the work. The authors are grateful to Perimeter Institute for Theoretical Physics, Waterloo Canada and to the Institute for Quantum Computing, University of Waterloo, Ontario, Canada for hospitality at several stages of 
this research. Research at Perimeter Institute is supported in part by the Government of Canada through NSERC and by the Province of Ontario through MRI.
FD and PW are supported in part by COST Action MP1006. PW is supported by EPSRC grant No. EP/K022717/1.  JH is supported by a grant from the John Templeton Foundation. 

\begin{appendix}
\section{Definitions of the NPA heirarchy}

\label{a:npa_defs}

Here we demonstrate the equivalence  between the original definition of the sets of behaviours $Q^1$ and $Q^{1+AB}$ given by NPA and the definitions in the main text. 

Firstly we restate the original NPA form of the $Q^1$ condition
adapting it to our joint measurement scenarios. 
For  Bell scenarios, the first three conditions below are equivalent to equation (19) in \cite{Navascues:2008} and the remaining two conditions to (20) in \cite{Navascues:2008} for $n=1$.  NPA's ``null string'' $1$ is more naturally referred to as $\Xi$ in our terminology.

\begin{definition}
\label{d:q1npa}
A behaviour $(\Xi, \tcM, P)$  is in $Q^1_{\text{NPA}}$ if there exists a positive semi-definite matrix $\Gamma^1$ indexed by the set of all fine-grained outcomes $X \in M_i$ of all basic measurements $M_i \in \cM$, and by  $\Xi$, such that:
\begin{itemize}
\item
\begin{equation}
\label{e:q1npa_norm}
\Gamma^1_{\Xi,\Xi} = 1 ;
\end{equation}
\item If $M_i$ is a basic measurement and $X \in M_i$ is a fine-grained outcome,
\begin{equation}
\label{e:q1npa_a}
\Gamma^1_{\Xi,X} = P(X);
\end{equation}
\item  If basic measurements $M_i,M_j \in \cM$  are jointly performable and $M_i \neq M_j$, and $X \in M_i$ and $Y \in M_j$  are fine-grained outcomes, then
\begin{equation}
\label{e:q1npa_b}
\quad \Gamma^1_{X,Y} = P(X \cap Y);
\end{equation}
\item If $M_i$ is a basic measurement and $X \in M_i$ is a fine-grained outcome,
\begin{equation}
\label{e:q1npa_c}
\Gamma^1_{\Xi,X}= \Gamma^1_{X,X};
\end{equation}
\item If $M_i$ is a basic measurement and $X,Y \in M_i$ are fine-grained outcomes,
\begin{equation}
\label{e:q1npa_d}
\Gamma^1_{X,Y} = 0 \mbox{ if }X\neq Y.
\end{equation}
\end{itemize}
\end{definition}

The translation to the terminology of NPA is fairly direct.  For example, for two fine-grained outcomes $X$ and $Y$ taken from jointly performable basic measurements,  $P(X \cap Y)$ corresponds to $P(a,b)$, because the only jointly performable pairs of measurements in bipartite Bell scenarios are those taken from different wings.  The following reformulation brings us closer to that used in the main text:

\begin{lemma}
\label{l:q1npa}
A behaviour $(\Xi, \tcM, P)$  is in $Q^1_{\text{NPA}}$ if and only if there exists a positive semi-definite matrix $\Gamma^1$ indexed by the set of all fine-grained outcomes $X \in M_i$ of all basic measurements $M_i \in \cM$, and by $\Xi$, such that:
\begin{itemize}
\item For all basic measurements $M_i \in \cM$ and for all $X \in M_i \cup \{ \Xi \}$,
\begin{equation}
\label{e:q1npa2_sum}
\Gamma^1_{X,\Xi} = \sum_{Y \in M_i} \Gamma^1_{X,Y} ;
\end{equation}
\item  If basic measurements $M_i,M_j \in \cM$  are jointly performable and $X \in M_i$ and $Y \in M_j$  are fine-grained outcomes, then
\begin{equation}
\label{e:q1npa2_joint}
\quad \Gamma^1_{X,Y} = P(X \cap Y);
\end{equation}
\end{itemize}
\end{lemma}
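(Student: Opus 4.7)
The plan is to prove the equivalence by direct matching of the two sets of conditions, with positive semi-definiteness of $\Gamma^1$ as a background assumption common to both formulations. I will read ``jointly performable'' in condition (\ref{e:q1npa2_joint}) as including the trivial case $M_i=M_j$, so that a single measurement is jointly performable with itself; this makes (\ref{e:q1npa2_joint}) apply within a single measurement, and is essential for the reformulation to have the same strength as Definition \ref{d:q1npa}. All entries are real and $\Gamma^1$ is symmetric by Hermiticity.

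For the forward direction, given conditions (\ref{e:q1npa_norm})--(\ref{e:q1npa_d}), I would first note that (\ref{e:q1npa2_joint}) reduces to (\ref{e:q1npa_b}) when $M_i\neq M_j$. For $M_i=M_j$, the diagonal case $X=Y$ follows by chaining (\ref{e:q1npa_c}) and (\ref{e:q1npa_a}) to get $\Gamma^1_{X,X}=P(X)=P(X\cap X)$, and the off-diagonal case $X\neq Y$ follows from (\ref{e:q1npa_d}) together with $P(X\cap Y)=P(\emptyset)=0$. Next I would verify the sum rule (\ref{e:q1npa2_sum}) by splitting on $X=\Xi$ versus $X\in M_i$: the first case uses (\ref{e:q1npa_norm}) and (\ref{e:q1npa_a}) together with $\sum_{Z\in M_i}P(Z)=1$; the second uses (\ref{e:q1npa_d}) to collapse the right-hand side to its diagonal term, and then (\ref{e:q1npa_c})--(\ref{e:q1npa_a}) to identify this with $\Gamma^1_{X,\Xi}$.

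For the backward direction, given (\ref{e:q1npa2_sum}) and (\ref{e:q1npa2_joint}), I would peel off the five NPA conditions in a convenient order. Applying (\ref{e:q1npa2_joint}) with $M_i=M_j$ immediately yields both (\ref{e:q1npa_d}) (take $X\neq Y$ in the same $M_i$, giving $\Gamma^1_{X,Y}=0$) and the diagonal identity $\Gamma^1_{X,X}=P(X)$. Feeding (\ref{e:q1npa_d}) back into (\ref{e:q1npa2_sum}) for $X\in M_i$ leaves only the diagonal term on the right, so $\Gamma^1_{X,\Xi}=\Gamma^1_{X,X}=P(X)$; by Hermiticity this yields (\ref{e:q1npa_a}), and comparison with the diagonal identity yields (\ref{e:q1npa_c}). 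Condition (\ref{e:q1npa_b}) is the $M_i\neq M_j$ case of (\ref{e:q1npa2_joint}). Finally, (\ref{e:q1npa_norm}) drops out of (\ref{e:q1npa2_sum}) with $X=\Xi$, once (\ref{e:q1npa_a}) is known and the marginal $P(\cdot|M_i)$ is normalised.

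The argument is essentially bookkeeping: no appeal to positive semi-definiteness beyond Hermiticity is required, because the constraints propagate through the linear sum rule. The only mildly delicate point I anticipate is the convention on ``jointly performable''; if one insisted on reading $M_i\ne M_j$ in (\ref{e:q1npa2_joint}), one could still recover the same conclusions by exploiting positive semi-definiteness to establish the vector identity $v_\Xi=\sum_{Z\in M_i}v_Z$ in a Gram representation $\Gamma^1_{X,Y}=\langle v_X|v_Y\rangle$, and then deducing (\ref{e:q1npa_d}), (\ref{e:q1npa_c}) and (\ref{e:q1npa_norm}) from it. Either route is short; I do not expect any significant obstacle.
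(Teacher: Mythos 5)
Your proposal is correct and follows essentially the same route as the paper: the same case-splitting bookkeeping in the forward direction (handling $M_i\neq M_j$, $M_i=M_j$ with $X\neq Y$, and $X=Y$ separately, then the $X=\Xi$ and $X\in M_i$ cases of the sum rule) and the same chain of deductions in the backward direction, with positive semi-definiteness playing no role beyond Hermiticity. Your explicit remark that ``jointly performable'' must include the case $M_i=M_j$ is a convention the paper uses implicitly, so flagging it is a minor clarification rather than a deviation.
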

\begin{proof}
First we show that (\ref{e:q1npa_norm}-\ref{e:q1npa_d}) imply (\ref{e:q1npa2_sum},\ref{e:q1npa2_joint}).  Equation (\ref{e:q1npa2_joint}) is the same as (\ref{e:q1npa_b}) in the case that $M_i \neq M_j$.  When $M_i=M_j$ and $X \neq Y$  (\ref{e:q1npa2_joint}) follows from (\ref{e:q1npa_d}) (in this case $X \cap Y=\emptyset$), and for $X=Y$  (\ref{e:q1npa2_joint}) follows from (\ref{e:q1npa_c}) and (\ref{e:q1npa_a}).  To obtain (\ref{e:q1npa2_joint}) when $X=\Xi$, we apply (\ref{e:q1npa_norm}) and then (\ref{e:q1npa_a}):
\begin{equation}
\Gamma^1_{\Xi,\Xi} = 1 = \sum_{Y \in M} P(Y) = \sum_{Y \in M} \Gamma^1_{\Xi,Y}
\end{equation}
for all basic measurements $M\in \cM$. Otherwise $X$ is some outcome of the basic measurement $M_i$, and we apply (\ref{e:q1npa_a}) and then (\ref{e:q1npa2_joint}):
\begin{equation}
\Gamma^1_{X,\Xi} = P(X) = \sum_{Y \in M_i} P(X \cap Y) = \sum_{Y \in M_i} \Gamma^1_{X,Y}
\end{equation}
Now we show the converse, that (\ref{e:q1npa2_sum},\ref{e:q1npa2_joint}) imply (\ref{e:q1npa_norm}-\ref{e:q1npa_d}).  Equation (\ref{e:q1npa2_joint}) implies (\ref{e:q1npa_b}) and (\ref{e:q1npa_d}).  Then (\ref{e:q1npa_d}) and (\ref{e:q1npa2_sum}) together imply (\ref{e:q1npa_c}).  Then (\ref{e:q1npa_c}) and (\ref{e:q1npa2_joint}) together imply (\ref{e:q1npa_a}). Finally (\ref{e:q1npa_a}) and (\ref{e:q1npa2_sum}) together imply (\ref{e:q1npa_norm}).
\end{proof}

It is an elementary result that a matrix is positive semidefinite (has no negative eigenvalues) if and only if it is the inner product matrix (or ``Gram matrix'') of some set of vectors in a Hilbert space, and we rely on this to prove the following result. 

\begin{lemma}
 $Q^1=Q^1_{\text{NPA}}$.
\end{lemma}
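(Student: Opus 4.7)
The approach is to exploit the standard fact that a Hermitian matrix is positive semidefinite if and only if it is the Gram matrix of a collection of vectors in a Hilbert space. I would take Lemma \ref{l:q1npa} as the starting point, since its two conditions (\ref{e:q1npa2_sum}) and (\ref{e:q1npa2_joint}) translate most cleanly into inner-product statements.

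For the inclusion $Q^1_{\text{NPA}} \subseteq Q^1$: given $\Gamma^1$ satisfying the hypotheses of Lemma \ref{l:q1npa}, realise it as a Gram matrix and obtain vectors $\{|X\ra\}$ indexed by $\{\Xi\}$ together with the fine-grained outcomes of basic measurements; by restricting to their linear span I may assume these vectors span a Hilbert space $\cH$. For an arbitrary $X\in\cC$, choose a basic measurement $M_i$ with $X\in\EA_{M_i}$ and define $|X\ra := \sum_{Y\in M_i,\ Y\subseteq X} |Y\ra$. The one subtlety is that this could in principle depend on the choice of $M_i$. Using the product structure (\ref{e:basic_outcome}), I would first observe that distinct basic measurements $M_i\neq M_j$ share only $\emptyset$ and $\Xi$ in their coarse-grained algebras, so the only non-trivial check is that $\sum_{Y\in M_i}|Y\ra = |\Xi\ra$ independently of $i$. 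This is exactly what (\ref{e:q1npa2_sum}) delivers: it says $|\Xi\ra - \sum_{Y\in M_i}|Y\ra$ has zero inner product with every element of the spanning set, and hence vanishes in $\cH$. Properties (\ref{e:q1sum}) and (\ref{e:q1joint_measurement}) then follow by direct computation, the first from the disjoint decomposition of coarse-grained outcomes of $M_i$ into its fine-grained outcomes, and the second by expanding both sides as double sums and applying (\ref{e:q1npa2_joint}) term by term.

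For the converse $Q^1 \subseteq Q^1_{\text{NPA}}$: given vectors $\{|X\ra\}_{X\in\cC}$ satisfying Definition \ref{d:q1}, define $\Gamma^1_{X,Y} := \la X | Y\ra$ with $X,Y$ ranging over $\{\Xi\}$ together with the fine-grained outcomes of basic measurements (all of which lie in $\cC$, since $\Xi\in\EA_{M_i}$ for every $i$). Being a Gram matrix, $\Gamma^1$ is automatically positive semidefinite. Condition (\ref{e:q1npa2_joint}) is then just the restriction of (\ref{e:q1joint_measurement}) to fine-grained outcomes, while (\ref{e:q1npa2_sum}) follows by iterated application of (\ref{e:q1sum}) to obtain $|\Xi\ra = \sum_{Y\in M_i}|Y\ra$ and then taking inner products with an arbitrary $|X\ra$ in the indexing set.

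The main obstacle is the well-definedness step in the forward direction: the extension from the basic indexing set to all of $\cC$ must not produce inconsistencies when the same coarse-grained outcome appears in multiple $\EA_{M_i}$. Once one observes that the product structure (\ref{e:basic_outcome}) forces any such overlap to be confined to $\{\emptyset,\Xi\}$, the consistency of the $\Xi$ case is precisely the content of (\ref{e:q1npa2_sum}), and everything else reduces to routine linear-algebra manipulations powered by the Gram correspondence.
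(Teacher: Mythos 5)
Your proof is correct and takes essentially the same route as the paper's: both pass through Lemma \ref{l:q1npa}, use the Gram-matrix characterisation of positive semidefiniteness, and extend the vectors from the fine-grained basic outcomes and $\Xi$ to all of $\cC$ by summation, with your well-definedness discussion (overlaps of distinct $\EA_{M_i}$ confined to $\emptyset$ and $\Xi$) making explicit a point the paper leaves implicit. One small overstatement: (\ref{e:q1npa2_sum}) only gives orthogonality of $|\Xi\ra - \sum_{Y\in M_i}|Y\ra$ to $|\Xi\ra$ and to the $|Y\ra$ with $Y\in M_i$, not to the whole spanning set, but since those are exactly the vectors appearing in this difference it already has zero norm and hence vanishes, so your conclusion stands.
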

\begin{proof}
Consider a behaviour satisfying the conditions of definition \ref{d:q1} 
and define  $\Gamma^1_{X,Y} := \la X | Y \ra$ for $X, Y$ fine-grained outcomes of basic 
measurements or $\Xi$. Then
 $\Gamma^1$ is positive semidefinite,  and equations (\ref{e:q1npa2_sum},\ref{e:q1npa2_joint}) are implied by (\ref{e:q1sum},\ref{e:q1joint_measurement}) respectively.

Conversely, consider a behaviour in $Q^1_{\text{NPA}}$.  There exists a set of vectors indexed by the set of fine-grained basic measurement outcomes and $\Xi$, spanning a Hilbert space $\cH$, such that $\Gamma^1_{X,Y} = \la X | Y \ra$.  We can extend this set of vectors to one indexed by $\cC$ -- that is, all coarse-grained outcomes of basic measurements -- by summation 
according to (\ref{e:q1sum}). This is consistent 
with the property (\ref{e:q1npa2_sum}) of the original vectors.  Condition (\ref{e:q1joint_measurement}) follows from (\ref{e:q1npa2_joint}), noting that if this equation holds for the fine-grained basic measurement outcomes then it will also hold for unions of them, the coarse-grained outcomes.
\end{proof}

Very similar reasoning can be applied to the case of $\QAB$.

\end{appendix}
\bibliographystyle{../jhep}
\bibliography{refs0.2}

\end{document}